\newtheorem{thm}{Theorem}  
\let\paragraph\subsection
\title{Division algebra valued energized simplicial complexes}
\author{Oliver Knill}
\date{August 23, 2020]}
\address{Department of Mathematics \\ Harvard University \\ Cambridge, MA, 02138 }
\subjclass{15A15, 16Kxx, 05C10, 57M15, 68R10, 05E45}
\keywords{Geometry of simplicial complexes, determinants, division algebras}
\begin{document}
\maketitle

\begin{abstract}
We look at matrices $L,g$ defined by a function $h:G \to K$, where
$G$ is a finite set of sets and $K$ is a normed division ring which does not need
to be commutative, nor associative but has a conjugation leading to the
norm $|h|^2 = h^* h$. The target space $K$ can be a normed real division algebra
like the quaternions or an algebraic number field like a quadratic field. 
For parts of the results we can even include Banach algebras like an operator 
algebra on a Hilbert space. The wave $h$ on $G$ then defines connection matrices 
$L,g$ in which the entries are in $\mathbb{K}$. 
We show that the Dieudonn\'e determinants of $L$ and $g$ are both
equal to the Abelianization of the product of all the field values on $G$.
If $G$ is a simplicial complex and $h$ takes values in the units $\mathbb{U}$ of 
$\mathbb{K}$, then $g^*$ is the inverse of $L$ and the sum of the energy values is 
equal to the sum of all the Green function matrix entries $g(x,y)$.
If $K$ is the field $\mathbb{C}$ of complex numbers, we can study the spectrum of 
$L(G,h)$ in dependence of the field $h$. The set of matrices with simple spectrum defines 
a $|G|$-dimensional non-compact K\"ahler manifold that is disconnected in general and 
for which we can compute the fundamental group in each connected component.
\end{abstract}

\section{In a nutshell}

\paragraph{}
Assume $G$ is a {\bf finite set of sets} and $\mathbb{K}$ is a {\bf normed division ring} with 
$1$ and conjugation $*$. The ring $\mathbb{K}$ is not need to be associative nor commutative, it 
can be one of the division algebras $\mathbb{R},\mathbb{C},\mathbb{H},\mathbb{O}$,
or also a quadratic field. Let $\mathbb{U}=\{ |h^* h|=1 \}$ is the set of units in $\mathbb{K}$.
The division algebra relation $|h|^2 = h^* h$ for $*$ division algebras can be 
weakened to a {\bf Banach algebra condition} $|xy| \leq |x| |y|$, in particular if the fields
take values in the units $\mathbb{U} \subset \mathbb{K}$ of elements with norm $1$. 

\paragraph{}
A function $h: G \to \mathbb{K}$ defines the $\mathbb{K}$-value $H(A)=\sum_{x \in A} h(x)$ 
for a subset $A$ of $G$. In the topological case, $h(x)=\omega(x) = (-1)^{{\dim}(x)}$, 
where $H(A)=\chi(A)$ is the {\bf Euler characteristic} of $A$. 
In the case $h(x)=1$ then $H(A)=|A|$ is the {\bf cardinality} of $A$. Define the matrix
$L(x,y)=H(W^-(x) \cap W^-(y))$, where $W^-(x) = \{ y \in G, y \subset x \}$ is the 
{\bf core} of $x$ and $g(x,y) = \omega(x) \omega(y) H(W^+(x) \cap W^+(y))$ with the star 
$W^+(x) = \{ y \in G, x \subset y \}$ of $x$. 

\paragraph{}
Here are summaries of the results. The first part generalizes  \cite{KnillEnergy2020}. The
second part on the spectrum bridges to complex differential geometry. \\
(A) The Dieudonn\'e determinant ${\rm det}(L) = {\rm det}(g) =\prod_x \overline{h}(x)$ 
    in the Abelianization of $\mathbb{K}$. The Study determinant of $L$ and $g$ is 
    ${\rm det}(L) = {\rm det}(g) = \prod_{x \in G} |h(x)|$. (In the Banach algebra case, this would
    be $\leq \prod_x |h(x)|$. \\
(B) If $G$ is a simplicial complex and $h$ is $\mathbb{U}$ valued, 
    then $g^*$ is the inverse of $L$ in the sense
    $g^* L = g L^* = 1$ in $M(n,\mathbb{K})$. This works also in {\bf $C^*$-algebra} settings. \\
(C) If $G$ is a simplicial complex, the energy relation $\sum_{x,y} g(x,y) = H(G)$ holds. \\
(D) For $\mathbb{K}=\mathbb{C}$ and $x \in G$, the deformation 
    $h_t(x)=e^{i t} h(x),h_t(y)=y, y \neq x$ defines a spectral deformation $\lambda_k(t)$, 
    and $\lambda_k(2\pi) \neq \lambda_k(0)$ in general. 
    The rotations for which $\lambda_k(t) \neq \lambda_l(t)$ for $k \neq l$
    define relations for a {\bf finitely presented permutation group} $\Pi(G,h_0)$. \\
(E) For fixed $G$, the $n$-dimensional non-compact {\bf K\"ahler manifold} $M$ of 
    matrices $L(G,h) \in GL(n,\mathbb{C})$ with simple spectrum is in general not connected. 
    The fundamental group $\pi_1(M,h_0)$ of a component containing $L(G,h_0)$ 
    is explicitly computable as $\pi(M,h_0)$ which is the finitely presented group 
    $\Pi(G,h_0)$ in which cyclic relations are omitted. 
    This allows us to construct many K\"ahler manifolds with
    explicitly known fundamental group.

\section{Result (A): Determinants}

\paragraph{}
Assume $G$ is an arbitrary {\bf finite set of sets}. 
Let $\mathbb{K}$ is a {\bf normed division ring} with $1$ and conjugation $*$ and 
norm $|h|^2=h^* h = h h^*$. The ring $\mathbb{K}$ does not need to be associative nor 
commutative, nor does it have to be defined over the reals. 
It can be one of the four real normed division algebras 
$\mathbb{R},\mathbb{C},\mathbb{H},\mathbb{O}$, or a quadratic field like the 
Gaussian rationals $\mathbb{Q}[i]$ which is an example of an {\bf algebraic number field}.

\paragraph{}
A function $h: G \to \mathbb{K}$ defines the $\mathbb{K}$-value $H(A)=\sum_{x \in A} h(x)$
for a subset $A$ of $G$. When considered on simplicial complexes $H$ and $h$ is the same on sets of the
same cardinality is also known as a {\bf valuation} as $H(A \cap B) + H(A \cup B) = H(A) + H(B)$. 
In the topological case, $h(x)=\omega(x) = (-1)^{{\dim}(x)}$, the valuation $H(A)=\chi(A)$ defines the
Euler characteristic of $A$. In the case of constant $h(x)=1$ then $H(A)=|A|$ is the cardinality. 
Define the matrix 
$L(x,y)=H(W^-(x) \cap W^-(y))$, where $W^-(x) = \{ y \in G, y \subset x \}$ is the {\bf core} of $x$ and
$g(x,y) = \omega(x) \omega(y) H(W^+(x) \cap W^+(y))$ with the star 
$W^+(x) = \{ y \in G, x \subset y \}$ of $x$. 

\paragraph{}
The {\bf Leibniz determinant} 
${\rm det}(L) = \sum_{\sigma} {\rm sign}(\sigma) L_{1,\sigma(1)} \cdots L_{n \sigma(n)}$ 
is defined for matrices $L \in M(n,\mathbb{K})$ for any ring $\mathbb{K}$ but it fails the Cauchy-Binet 
relation ${\rm det}(AB)={\rm det}(A) {\rm det}(B)$ in general. Other determinants have therefore
been defined. The {\bf Dieudonn\'e determinant} \cite{Dieudonne1943}
and {\bf Study determinant} \cite{Study1920} both do satisfy the product relation. 
Their definition uses row reduction of $L$ to an upper triangular matrix for its definition.
In order to row reduce, we need the ability to divide, hence the assumption of having a division ring. 
The Dieudonn\'e determinant takes values in the {\bf Abelianization} $\overline{\mathbb{K}}$ 
of $\mathbb{K}$ while the Study determinant takes real values and involves the norms of the product
of diagonal elements after reduction. More about the linear algebra is included in 
the Appendix. The following formula has originally first been 
considered if $h(x) = \omega(x)=(-1)^{{\rm dim}(x)}$ 
in which case, the formula shows that the matrices $L,g$ are 
unimodular integer matrices \cite{Unimodularity}.

\begin{thm}[Determinant formula]
The Dieudonn\'e determinant satisfies 
${\rm det}(L) = {\rm det}(g) = \overline{h}(x_1) \cdot \overline{h}(x_2) \cdots \overline{h}(x_n)$.
\label{1}
\end{thm}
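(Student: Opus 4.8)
The plan is to exhibit an explicit row reduction (or an explicit factorization) of $L$ and of $g$ that puts each matrix in upper triangular form with diagonal entries given, up to the Abelianization, by the values $h(x)$. The starting observation is that $L$ and $g$ both factor through incidence-type matrices. Indeed, if we let $P$ be the $0$--$1$ matrix with $P(x,z)=1$ when $z\in W^-(x)$ (equivalently $z\subset x$) and $D$ the diagonal matrix $D(z,z)=h(z)$, then $L(x,y)=H(W^-(x)\cap W^-(y))=\sum_{z\subset x,\,z\subset y}h(z)$, i.e. $L = P\,D\,P^{T}$ where the product is read with the appropriate left/right conventions dictated by non-commutativity (we must be careful to write $L=P\,D\,P^{*}$ with $D$ inserted between, so that $L(x,y)=\sum_z P(x,z)h(z)\overline{P(y,z)}$, and since $P$ is real $\overline{P}=P$). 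Similarly $g(x,y)=\omega(x)\omega(y)\sum_{x\subset z,\,y\subset z}h(z)$ factors as $g = \Omega\,Q\,D\,Q^{*}\,\Omega$ where $Q(x,z)=1$ iff $z\in W^+(x)$ and $\Omega$ is the diagonal sign matrix $\omega$; note $Q=P^{T}$ if we order $G$ so that $x\subset y\Rightarrow x\le y$, which we fix once and for all.

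Next I would invoke the multiplicativity of the Dieudonn\'e determinant on $GL(n,\mathbb{K})$ — more precisely the fact (recalled in the Appendix) that $\det$ is a homomorphism to $\overline{\mathbb{K}}^{*}$ and that it is insensitive to elementary row operations of the "add a left-multiple of one row to another'' type, while it multiplies by $\overline{\lambda}$ under scaling a row by $\lambda$. The matrix $P$ (after the chosen ordering) is \emph{unitriangular} over $\mathbb{Z}$: $P(x,x)=1$ because $x\in W^-(x)$, and $P(x,y)=0$ unless $y\subset x$, hence unless $y<x$ in our order; so $P$ is lower unitriangular, $\det P = 1$, and likewise $\det P^{*}=1$. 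Therefore $\det L = \det(PDP^{*}) = \det P\cdot\det D\cdot\det P^{*} = \det D = \overline{h}(x_1)\cdots\overline{h}(x_n)$, the product taken in the abelian group $\overline{\mathbb{K}}^{*}$ so that the ordering is irrelevant. The same computation applied to $g=\Omega Q D Q^{*}\Omega$ gives $\det g = \det\Omega\cdot\det D\cdot\det\Omega = \overline{\omega}(x_1)^2\cdots\overline{\omega}(x_n)^2 \cdot \det D = \det D$, since each $\overline{\omega}(x)^2=1$; this yields $\det L=\det g=\overline{h}(x_1)\cdots\overline{h}(x_n)$ as claimed, and simultaneously the Study-determinant statement follows by taking norms, since $|\overline{h}(x)|=|h(x)|$.

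The main obstacle, and the step I would spend the most care on, is the \emph{non-associativity and non-commutativity} lurking in the three-fold product $PDP^{*}$: over $\mathbb{O}$ or over a non-commutative division ring, "$PDP^{*}$'' is a priori ambiguous, and the Dieudonn\'e determinant is only defined after choosing a consistent row-reduction procedure. The way around this is to not rely on abstract multiplicativity of $\det$ over $M(n,\mathbb{O})$ (which can fail) but to perform the row reduction of $L$ \emph{by hand}: because $P$ is $0$--$1$ and lower unitriangular, the associativity issues never arise — every entry $L(x,y)=\sum_{z\in W^-(x)\cap W^-(y)}h(z)$ is a plain sum of elements of $\mathbb{K}$ with \emph{no products of non-real elements}, so the ring's failure of associativity is invisible at the level of these matrix entries. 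One then clears columns left to right: subtracting suitable rows (with \emph{integer}, hence central, multipliers) reduces $L$ to the diagonal matrix $D$, and each such elementary operation leaves the Dieudonn\'e determinant unchanged; the only determinant-changing steps are the final normalizations, contributing exactly $\prod_x \overline{h}(x)$. I would carry out the analogous reduction for $g$ using that $g(x,y)$ is, up to the central signs $\omega(x)\omega(y)$, again a sign-free sum of $h$-values over $W^+(x)\cap W^+(y)$, with $W^+$ playing the role $W^-$ did; the matrix $Q$ is upper unitriangular in our order, so the same bookkeeping applies. Finally I would remark that in the Banach-algebra weakening the equalities for the Study determinant become inequalities $\det(L)\le\prod_x|h(x)|$ exactly because $|xy|\le|x||y|$ replaces $|xy|=|x||y|$ in the norm computation, while statements (B) and (C) only need the unit-valued case where all the reductions above go through verbatim.
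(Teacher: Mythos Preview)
Your proposal is correct and takes a genuinely different route from the paper.

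The paper argues by induction on $|G|$: one adjoins a new set $x$ with value $X=h(x)$, notes that every entry of $L$ depending on $X$ is affine in $X$, and then uses Laplace expansion together with the inductive hypothesis to see that $\det(L)$ is linear in $X$ with the right slope; a separate duality argument (replacing each $x$ by its complement in $\bigcup G$) is invoked to transfer the result from $L$ to $g$.

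You instead recognize the factorization $L = P\,D\,P^{T}$ (and $g=\Omega\,P^{T} D\,P\,\Omega$) with $P$ the $0$--$1$ incidence matrix of the inclusion relation and $D=\mathrm{diag}(h)$, observe that in any linear extension of the inclusion order $P$ is unitriangular over $\mathbb{Z}$, and then row/column reduce by \emph{integer} (hence central, associating) multipliers to reach an upper triangular matrix with diagonal $D$. This gives $\det(L)=\det(g)=\prod_x \overline{h}(x)$ directly.

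What each approach buys: your factorization is shorter, structurally transparent, and dispenses with the duality trick and with the somewhat delicate ``minors not containing the last row or column vanish'' step in the paper's induction. Your handling of the non-commutative/non-associative issues is also cleaner than the paper's, precisely because all elementary operations in your reduction use only integer coefficients, so no product of two non-central ring elements ever has to be formed; the paper's Laplace expansion, by contrast, produces genuine $\mathbb{K}$-products inside the minors and has to appeal to the Abelianization at each step. The paper's inductive argument, on the other hand, does not require one to spot the $PDP^{T}$ decomposition and fits the ``build $G$ one cell at a time'' philosophy used elsewhere in the paper (and in the earlier CW-complex proofs it cites).
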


\begin{proof}
The assumption of having $G$ an arbitrary set of sets rather than a simplicial complex
and also not insisting on any ordering has the advantage that we have now a {\bf duality}
between the matrices $L^-(x,y)=H(W^-(x) \cap W^-(y))$ and 
$L^+(x,y) = H(W^+(x) \cap W^+(y))$, where $G$ is replaced by $G^*$, the set of 
sets with the {\bf complements} $x^* = (\bigcup_{x \in G} G) \setminus x$ and 
assigning to the complement set the same value $h(x)$. The determinant
of $g(x,y) = \omega(x) \omega(y) L^+(x,y)$ is the same than the determinant of
$L^+(x,y)$ because multiplying one row or column with $-1$ changes the sign of
the Leibniz determinant. For the Dieudonn\'e determinant, switching rows does not
change anything if $\overline{-1}$ is a commutator like in the case of 
quaternions or octonions, where $i j i^{-1} j^{-1} = -1$. 
To prove the statement, we use induction with respect to the number of elements,
noting that the induction assumption also shows that the determinant is multi-linear in 
each of the terms $h(x_k)$.
Assume therefore we have proven the statement for any $G$ with $|G|=n-1$ or less, we add a new set
$x$ to $G$ and give it a new value $h(x)=X$. Now write down the matrix $L$. Every matrix
entry $L(y,z)$ with $y \subset x, z \subset x$ contains a linear term $a_{yz} + X$,
where $a_{yz}$ is an other element in $\mathbb{K}$. 
All other matrix entries do not depend on $X$. Laplace expansion allows to write
the determinant as a sum of minors and since each minor by induction assumption is 
linear in $X$, ${\rm det}(L)$ is a linear function of $X$. Each $k \times k$-minor not containing
the last row or column is zero. The reason is that the induction assumption works and 
that this minor is linear in $X$ with factor given by the minor when $X=0$. 
Having all minors containing the last row or last column to be zero and the others linear in $X$
shows ${\rm det}(L)= h(x_1)= \dots h(x_{n-1}) X$.
\end{proof}

\paragraph{}
Let us add some initial remarks. More discussion about the motivation to look at fields 
or skew fields $h$ on geometries $G$ is in a discussion section at the end. 
First of all, every minor can be re-interpreted as a 
determinant of a sub structure of $G$ with an adapted energy. This is an additional advantage of 
working with general multi-graphs and not only with simplicial complexes. 
A arbitrary subset $A$ of $G$ is in the same category of objects and then defines
a minor. The proof of the above statement is a bit easier if
$x$ is a maximal element in $G$. The matrix entries $L(y,z)$ for any other set does not involve 
the value of $x$ and the entry $X=h(x)$ only appears in the last row and column of $L$ as linear 
terms $X$ and not $a+X$ as in general. The Laplace expansion then shows that the minor
without last row and column is the slope factor of the determinant which is linear in $X$. 

\paragraph{}
The Abelianization works also for simpler algebraic structures like
monoids and magmas but one usually assumes associativity. Already for
quaternions or octonions, we need the Study or Dieudonne determinant, 
which agree there because $\overline{h}=|h|$. 
In the non-associative case like Jordan algebras or normed Lie algebras,
we have to specify brackets like bracketing from the right 
$abc = (a(bc))$. For $C^*$ algebra or normed Lie algebras one assumes only 
inequalities $|xy| \leq |x| |y|$ and the determinant formula becomes then 
an inequality too in the Study determinant case. 
The matrices $L$ and $g$ are then determined, if an order on $G$ is given. 
For the above result we do not need to have the elements of $G$ ordered in a specific way.
Because the Leibniz determinant does not satisfy Cauchy-Binet 
and also dependents of the ordering of $G$ are reasons that this determinant is not used much. 

\paragraph{}
For $\mathbb{K}=\mathbb{C}$, where we know that there are exactly $n$ complex eigenvalues $\lambda_k$
of $L$, we have $\prod_{k=1}^{|G|} \lambda_k = \prod_{x \in G} h(x)$. 
One can then also use the Leibniz determinant. The Dieudonn\'e determinant is then the same
than the Leibniz determinant but the Study determinant is in the complex case given as
${\rm det}(L) = \prod_{x \in G} |h(x)|$. So, the three determinant definitions are all different. 
The Dieudonn\'e determinant contains in general more information than the Study determinant. 
But there is also an advantage for the Study determinant: one has not to worry about commutators 
but directly can just look at the norm. 

\paragraph{}
Theorem~(\ref{1}) was proven in \cite{Unimodularity} for $h(x)=\omega(x)$ 
by building $G$ as a CW-complex. See \cite{KnillEnergy2020}. In \cite{MukherjeeBera2018}
a proof was given which avoids discrete CW-complexes. The CW proof still works $h$ can 
become more general \cite{EnergizedSimplicialComplexes}.
In each case, ${\rm det}(L)$ is multiplied by $h(x)$ each time a new simplex is added.
We realized in \cite{CountingMatrix} that the unimodularity theorem works even for
arbitrary finite sets of non-empty sets. This is a structure which is also called a {\bf multi-graph}.
This allows then to use {\bf duality} $x \to \hat{x}=(\bigcup_{x} x) \setminus x$ to switch the
{\bf stars} $W^+(x)$ and {\bf cores} $W^-(x)$  which can be seen as unstable and stable manifolds.
If $G$ is a simplicial complex, only $W^-(x)$ is a simplicial complex in general, $W^+(x)$ not. 

\section{Result (B): unit valued fields}

\paragraph{}
In the following result, we assume that $G$ is a simplicial complex, a finite set of non-empty sets
closed under the operation of taking non-empty subsets.
When writing matrices down, we often assume $G$ to be ordered, so that the first row and
first column corresponds to the first element of $G$. When working numerically, we usually make this
assumption by ordering according to size so that the matrices $g^* L$ 
are then in general upper triangular for any simplicial complex $G$. We do not 
have to assume this ordering however. Still, it is important to note that in the non-commutative
case, the order of $G$ matters in the sense that it determines the conjugacy class of the matrix.
If the fields take values in $\mathbb{U}$, we still have a wonderful relation between $g$ and $L$
but we need $G$ to be a simplicial complex. 

\begin{thm}[Green star identity]
Assume $G$ is a simplicial complex.
If $h: G \to \mathbb{U}$ takes values in the units $\mathbb{U}$ of $\mathbb{K}$,
then $g^* L = L g^* = 1$.
\label{2}
\end{thm}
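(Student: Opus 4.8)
The plan is to reduce the matrix identity $g^* L = 1$ to a combinatorial statement about simplicial complexes, namely a telescoping cancellation in sums of products of unit-valued field entries. First I would write out the $(x,z)$ entry of $g^* L$ explicitly. Since $g(y,z) = \omega(y)\omega(z) H(W^+(y)\cap W^+(z))$, the conjugate transpose has $g^*(x,y) = \overline{\omega(x)\omega(y) H(W^+(x)\cap W^+(y))}$; here the conjugation $*$ on $\mathbb{K}$ acts entrywise and transposes. Then $(g^* L)(x,z) = \sum_{y} g^*(x,y) L(y,z) = \sum_y \overline{H(W^+(x)\cap W^+(y))}\,\omega(x)\omega(y)\,H(W^-(y)\cap W^-(z))$. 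The sets appearing are: $W^+(x)\cap W^+(y)$ is the collection of simplices containing both $x$ and $y$, and $W^-(y)\cap W^-(z)$ is the collection of simplices contained in both $y$ and $z$. The key point is that a nonzero contribution to the $y$-sum requires $y$ to simultaneously be contained in some common superset of $x$ and to contain some common subset of $z$; in a simplicial complex this forces compatibility between $x$ and $z$.

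The core combinatorial step is to expand $H$ as a sum over the relevant sub-simplices and interchange the order of summation. Writing $H(W^+(x)\cap W^+(y)) = \sum_{u \supseteq x, u\supseteq y} h(u)$ and $H(W^-(y)\cap W^-(z)) = \sum_{v\subseteq y, v\subseteq z} h(v)$, we get a triple sum over $u, v, y$ with $x,y\subseteq u$ and $v\subseteq y,z$. Fixing $u$ and $v$, the inner sum is over all $y$ with $v\subseteq y\subseteq u$ that also satisfy $x\subseteq u$ (automatic from the $u$ constraint) and $v\subseteq z$ (automatic from the $v$ constraint), so the inner sum is $\omega(x)\,\overline{h(u)}\,h(v)\sum_{v\subseteq y\subseteq u}\omega(y)$. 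The interval sum $\sum_{v\subseteq y\subseteq u}\omega(y) = \sum_{v\subseteq y\subseteq u}(-1)^{\dim y}$ is the Euler characteristic of a combinatorial interval in the face poset, which vanishes unless $u=v$ (it is an alternating sum over the subsets of $u\setminus v$, equal to $(-1)^{\dim v}(1-1)^{|u\setminus v|}$). Hence only the diagonal terms $u=v$ survive, and since we need $x\subseteq u=v\subseteq z$ with $u=v$ in $G$, this collapses to a sum over simplices $u$ with $x\subseteq u\subseteq z$, contributing $\omega(x)\overline{h(u)}h(u)\omega(u) = \omega(x)\omega(u)|h(u)|^2 = \omega(x)\omega(u)$ using $h(u)\in\mathbb{U}$.

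After this collapse, $(g^*L)(x,z) = \omega(x)\sum_{x\subseteq u\subseteq z}\omega(u)$, which is again an interval Euler characteristic and equals $\omega(x)\cdot\omega(x) = 1$ when $x=z$ and $0$ when $x\subsetneq z$; when $x\not\subseteq z$ the index set is empty and we get $0$. One must double-check the case $x\not\subseteq z$ and the case $z\subsetneq x$ separately to confirm the relevant sets are empty or the interval sum vanishes, but these are the same alternating-sum argument. This gives $g^* L = 1$. For $Lg^* = 1$ I would either repeat the computation with the roles of $W^+$ and $W^-$ swapped — which is precisely the duality $x\mapsto \hat x$ invoked in the proof of Theorem~\ref{1}, exchanging stars and cores — or, more cheaply, note that in $M(n,\mathbb{K})$ for a normed division ring a one-sided inverse is two-sided (for the associative cases) and handle $\mathbb{O}$ by the explicit symmetric computation. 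The main obstacle I anticipate is bookkeeping in the non-associative/non-commutative setting: one must be scrupulous about the order of the factors $\overline{h(u)}h(v)\omega(y)$ and about where the conjugation lands, since $g^*$ involves both transposition and $*$; the interval-Euler-characteristic cancellation itself is robust, but justifying that the scalar reorderings are legitimate (using $|h(u)|^2 = h(u)^* h(u)$ being a real central scalar, and $\omega$ being a real sign) is where the argument needs care, and is exactly the place where the $C^*$-algebra generalization mentioned in the introduction must be checked to still go through.
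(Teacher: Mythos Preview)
Your proposal is correct and takes essentially the same route as the paper: an entrywise computation of $(g^*L)(x,z)$ that collapses to the vanishing of the alternating sum $\sum_{x\subseteq u\subseteq z}\omega(u)$ over a nontrivial Boolean interval. You are in fact more thorough than the paper, which jumps directly to the intermediate identity $(g^*L)(x,y)=\omega(x)\sum_{x\subseteq z\subseteq y}\omega(z)\,h^*(z)h(z)$ without writing out the triple-sum expansion and the first interval cancellation (the sum over the middle index forcing $u=v$) that you supply explicitly.
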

\begin{proof}
Given $x \in G$. Write $(g^* L)(x,x) = \sum_{y} g^*(x,y) L(y,x)$. 
Now $g^*(x,y) L(y,x) = \sum_{x \subset z \subset x} \omega(z)^2 h^*(z) h(z) = |h(x)|^2$.  \\
If $x,y \in G$  and not either $x \subset y$ or $y \subset x$, then there
is no $z$ which contains $y$ and is contained in $x$ so that $(g^* L)(x,y)=0$. 
If $x \subset y$, then, using $h^*(z) h(z) = |h(z)|^2=1$
$$ (g^* L)(x,y) = \sum_{x \subset z \subset y} \omega(x) \omega(z) h^*(z) h(z)
                = \omega(x) \sum_{x \subset z \subset y}  \omega(z) 1  \; .  $$
Now, there are an equal number of elements $z$ between $x$ and $y$ which have $\omega(z)=1$
than elements which have $\omega(z)=0$. This is a general fact as we can can just 
look at the simplex $y$ and remove all elements in $x$, then have a simplex in which there
are equal number of odd and even dimensional simplices (including the empty element). 
This was the place, where we needed that $G$ is a simplicial complex. 
If $y \subset x$, then there is no $z$ with $x \subset z \subset y$ and the sum is zero. 
\end{proof}

\paragraph{}
The etymology for the term ``Green-Star" is as follows:
we can look at $g(x,y)$ as {\bf Green function} entries which depend on the {\bf stars}
$W^+(x)$ and $W^+(y)$, so that we called it the {\bf Green-Star identity}.
The terminology of Green functions is extremely important in mathematical physics: whenever
we have a Laplacian $L$. The term ``star" is an official term in algebraic combinatorics. 

\paragraph{}
The word combination ``Green-Star" is also a bit of a pun because we had been 
blind for a long time. This is documented in blog entries of our quantum calculus blog. 
(Green Star also stands for {\bf Glaucoma}). We needed a many months of attempts and an 
insane amount of experiments to get the formula because we had been looking for expressions 
in which $g(x,y)$ is the Euler characteristic of a sub-complex. To experiment, we correlated 
the entries $g(x,y)$ with the Euler characteristic of various sub-complexes of $G$. 
The solution was not to insist on having simplicial complexes any more and indeed, 
stars $W^+(x)$ are examples of sub-structures of simplicial complexes which in general 
are {\bf not} simplicial complexes. 

\paragraph{}
Also the Green star formula for the matrix entries of the inverse $g$ of $L$
generalizes. While the entries $L(x,y)$ involve the cores of $x$ and $y$, the
entries $g(x,y)$ involve the stars of $x$ and $y$. The formula had been first
developed in the topological case. Remarkable in the constant case $h(x)=1$ is that
$g=L^{-1}$ is isospectral to $L$, which is a {\bf symplectic relation} and that $L,g$ are then 
both {\bf positive definite integer quadratic forms} which are isospectral.
This led to {\bf ispectral multi-graphs}
and a functional equation for the {\bf spectral zeta function} $\zeta(s) = \sum_k \lambda_k^{-s}$
defined by eigenvalues $\lambda_k$ of $L$. Unlike for the Riemann-Zeta function which is the
spectral zeta function of the circle $\mathbb{T}$, and more generally for spectral zeta functions of 
manifolds, we do not have to discard any zero eigenvalue
for connection Laplacians because they are invertible. 

\paragraph{}
It was important in the previous theorem that $h$ is $\mathbb{U}$-valued. The proof shows that
the condition is not only sufficient but also necessary:
the non-diagonal entries are of sums of the form $|h(y)|^2-|h(z)|^2$. The diagonal
entries of $g^* L$ are just $|h(x)|^2$, which shows that $g^* L = 1$ implies
that $L$ is $\mathbb{U}$-valued.
On the other hand, already in the complex case, the matrices 
$L$ and $g$ are symmetric but no more self-adjoint. The spectra are in the complex plane. 
We will see that this has also advantages as we can define K\"ahler manifolds $M=M(G,h)$ 
for any field $h: G \to \mathbb{K}$, where $G$ is a finite set of sets. 

\paragraph{}
The definitions of $L$ and $g$ do not tap into the multiplicative structure of the
algebra but once we multiply, it matters. However, if the simplicial complex $G$
is ordered so that the dimension increases, then $g^* L$ is upper triangular
and $L g^*$ is lower triangular. In the diagonal, we have then terms $|h(y)|^2$
and in the upper or lower part we have sums of expressions which are sums of 
$|h(y)|^2 - |h(z)|^2$ for different pairs of $y,z$. 

\paragraph{}
For $\mathbb{K}=\mathbb{R}$, we had a spectral relation telling that the number of negative
values of $h$ is equal to the number of negative eigenvalues of $L$. This could be rephrased
in that one can ``hear the Euler characteristic" of $G$ \cite{HearingEulerCharacteristic}. 
We do not know how to hear $H(G)$ in general yet. Yes, it is the sum of the matrix entries of
$g$ but we would like to have a formula which gives $H(G)$ in terms of the eigenvalues of $L$. 
Already for $\mathbb{K}=\mathbb{C}$, the spectrum of $L$ and $g$
are in the complex plane. While we do not know yet how to get $H(G)$ from the spectrum of $L$,
we started to study what happens if the wave amplitude $h(x) \in \mathbb{K}$ 
is deformed at a single simplex $x \in G$ and kept constant everywhere else. This is studied
in part (D). 

\section{Result (C): Energy theorem}

\paragraph{}
Also the generalization of the energy theorem needs that $G$ is a simplicial complex,
a finite set of non-empty sets closed under the operation of taking non-empty subsets. 
It has already been formulated in the complex case as a remark in \cite{KnillEnergy2020}, 
but it holds in general. The reason is that both sides do not really tap into the 
multiplicative structure of the algebra. 

\begin{thm}[Energy theorem]
Assume $G$ is a finite abstract simplicial complex. 
For any $h: G \to \mathbb{K}$, we have the energy relation $\sum_{x,y} g(x,y) = H(G)$.
\label{3}
\end{thm}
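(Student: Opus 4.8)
The plan is to unfold both sides using only the additive definition $H(A) = \sum_{z \in A} h(z)$ and then interchange the order of summation so that the apex simplex $z$ becomes the outermost index. Since $W^+(x) \cap W^+(y) = \{z \in G : x \subseteq z,\ y \subseteq z\}$, the definition $g(x,y) = \omega(x)\omega(y)\, H(W^+(x) \cap W^+(y))$ gives
\[
  \sum_{x,y} g(x,y)
  = \sum_{x,y} \omega(x)\,\omega(y) \sum_{\substack{z \in G \\ z \supseteq x,\, z \supseteq y}} h(z)
  = \sum_{z \in G} \Bigl( \sum_{\substack{x \in G \\ x \subseteq z}} \omega(x) \Bigr)\Bigl( \sum_{\substack{y \in G \\ y \subseteq z}} \omega(y) \Bigr) h(z) .
\]
Because the weights $\omega(x) = (-1)^{\dim x}$ are ordinary integers, both the interchange of summation and the factoring of the inner double sum into a product of two equal sums are just distributivity of the $\mathbb{Z}$-action on the additive group of $\mathbb{K}$; they use neither commutativity nor associativity of the multiplication in $\mathbb{K}$.

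Next I would establish the combinatorial identity $\sum_{x \in G,\, x \subseteq z} \omega(x) = 1$ for every $z \in G$. This is the only place the hypothesis enters: since $G$ is a simplicial complex and $z \in G$, every nonempty subset of $z$ also lies in $G$, so $\{x \in G : x \subseteq z\}$ is precisely the set of all nonempty subsets of $z$. Writing $k = |z| \geq 1$ and grouping by cardinality,
\[
  \sum_{x \subseteq z} \omega(x) \;=\; \sum_{j=1}^{k} \binom{k}{j} (-1)^{j-1} \;=\; -\bigl( (1-1)^k - 1 \bigr) \;=\; 1 ,
\]
equivalently: the face poset of $z$ together with the empty face carries equally many even- and odd-dimensional faces (its alternating cardinality sum is $(1-1)^k = 0$), and deleting the empty face — which $G$ does not contain — shifts this balance by $+1$. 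For a general finite set of sets this can fail, since faces of $z$ may be missing; this is exactly why the theorem is restricted to simplicial complexes, just as in the proof of the Green star identity.

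Substituting the identity back, each of the two bracketed factors above equals $1$, hence
\[
  \sum_{x,y} g(x,y) \;=\; \sum_{z \in G} 1 \cdot 1 \cdot h(z) \;=\; \sum_{z \in G} h(z) \;=\; H(G) ,
\]
which is the claim. I do not anticipate a genuine obstacle here: the argument is entirely additive in $\mathbb{K}$ apart from multiplication by the integers $\pm 1$, so it survives verbatim in the non-commutative, non-associative, and Banach-algebra settings of the paper; the only subtlety is to invoke the face-count identity with the convention (already used earlier in the excerpt) that the empty set is absent from $G$ and carries formal dimension $-1$.
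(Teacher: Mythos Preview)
Your argument is correct and is genuinely more direct than the paper's. By interchanging the order of summation so that the apex $z$ is outermost, you reduce the identity to the elementary fact $\sum_{\emptyset \neq x \subseteq z} \omega(x) = 1$ for a nonempty finite set $z$, which is exactly the Euler characteristic of a simplex. The only structural input is additivity in $\mathbb{K}$ and the $\mathbb{Z}$-action via $\pm 1$, so, as you say, no multiplicative hypotheses on $\mathbb{K}$ are touched.

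The paper takes a different route: it observes that both sides are affine (in fact linear) in each individual entry $h(x)$ and then invokes the already-established topological case $h=\omega$ from \cite{KnillEnergy2020} together with the trivial case $h=0$ to pin down the affine functions. Your approach buys self-containment --- you never appeal to the previously proven Euler-characteristic energy theorem, and you make completely explicit where the simplicial-complex hypothesis enters (the core $W^-(z)$ must be \emph{all} nonempty subsets of $z$). The paper's approach has the virtue of treating the general-$h$ result as a formal corollary of the $h=\omega$ case, but at the cost of depending on that earlier, more involved argument (the Gauss--Bonnet/Poincar\'e--Hopf machinery summarized after the proof). Your computation is in effect the ``coefficient comparison'' that makes the paper's linearity argument rigorous in one stroke.
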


\paragraph{}
We can establish the statement by standing on the shoulders of the theorem in the topological
case \cite{KnillEnergy2020}, and just comment on the later.

\begin{proof}
We just note that both sides of the equation are multi-affine in each energy value entry 
$X=h(x)$. This means that if change the single entry $X$, then the left hand side is of the 
form $a + b X$ with constants $a,b \in \mathbb{K}$ and the right hand side is $c + d X$
again with constants $c,d \in \mathbb{K}$. 
Then we notice that we know the relation in the constant zero case $h=0$, 
(where both sides are zero) and for $h(x)=\omega(x)$, where the theorem has been 
proven already and where both sides are the Euler characteristic. The
term $H(G) = \sum_{x \in G} h(x)$ obviously even linear in each of the entries so that $a+bX=X$.
Also, each term $g(x,y) = \omega(x) \omega(y) H(W^+(x) \cap W^+(y))$ is an affine function in each
entry $X=h(x_0)$. The sum $\sum_{x,y} g(x,y) = H(G)$ therefore is also affine. 
Having the values agree on two points assures us now that $c=a=0$ and $b=d$. 
\end{proof}

\paragraph{}
Let us just remind about the proof of Theorem~(\ref{2}) in the case
$h(x)=\omega(x)$. The proof itself does not directly generalize to complex valued fields. 
It has the following ingredients: \\ 

\begin{itemize}
\item 
Assume $h(x)=\omega(x)$ so that $H(G) = \chi(G)$ is the Euler characteristic. 
If $x \in G$, then $S(x) = W^-(x) + W^+(x)$, where $+$ is the join or  Zykov addition of 
simplicial complexes, which is dual to the disjoint union. The compatibility of the genus
$1-H(A)$ with the join has now the consequence that
$$    1-H(S(x)) = (1-H(W^-(x))) (1-H(W^+(x))  \; .   $$
This implies with $(1-H(W^-(x))) = \omega(x0$ 
that  $H(G) = \sum_x (1-H(W^+(x))) = \sum_x  \omega(x) (1-H(S(x))$ 
which we consider as a Gauss-Bonnet relation for Euler characteristic $H=\chi$. 

\item The above relation shows that the {\bf super trace} of the matrix $g$ which
$\sum_{x} \omega(x) g(x,x) =\sum_x K(x)$ agrees with the {\bf total energy} $H(G)=\chi(G)$.
This {\bf Gauss-Bonnet relation} which follows from a
{\bf Poincar\'e-Hopf relation} for the valuation $H=\chi$ defined by $h$. 
We can think about $\omega(x) g(x,x)$ as a curvature.

\item The last observation is that the potential $V(x) = \sum_{y \in G} g(x,y)$
which leads to the potential theoretical energy 
$$  V(x) = \sum_{x,y} g(x,y)  $$
of the vertex $x$ satisfies $V(x) = \omega(x) g(x,x)$. This shows that
{\bf curvature} of $x$ is equal to {\bf potential energy} of $x$ induced from 
all other simplices, (including self-interaction).
\end{itemize}

\paragraph{}
We are excited about the set-up because in classical physics, {\bf self-interaction}
is a sensitive issue. When looking at the electric field of a bunch of electrons, then 
at the point of each of the electrons, we have to disregard the field of the electron
itself, as it is infinite. To cite Paul Dirac from an interview given in 1982: 
{\it "I think that the present methods which theoretical physicists are using are not the correct methods.
They use what they call a renormalization technique, which involves handling infinite quantities.
And this is not mathematically a logical process. I would say that it is just
a set of ``working rules" rather than a correct mathematical theory. 
I don't like this whole development at all. I think that some other important 
discoveries will have to be made, before these questions are put into order."}
People always overestimate their own work, but we just want to point out that the 
field theory on finite set of sets $G$ taking values in $\mathcal{K}$ as worked on 
in the current document is completely absent from any infinities! 

\section{Result (D): Geometric phase}

\paragraph{}
For the last part, we assume $\mathbb{K}=\mathbb{C}$ as we can look at the spectrum of $L$.
We have seen in the real case, that the number of positive minus the number of negative
eigenvalues of $L$ is then $\chi(G)= H(G)$ as noticed in 2017 \cite{HearingEulerCharacteristic}.
We can try to generalize this.  While the spectrum of quaternion matrices is always 
non-empty \cite{Wood1984} which is related to the fundamental theorem of algebra 
in division algebras, there are indications that we can not always assign to 
$x \in G$ a canonical eigenvalue $\lambda_k$. 
 Such difficulties is the reason that we assume here $\mathbb{K}=\mathbb{C}$. We still believe
that the spectral situation in the quaternion and octonion case should be studied more.
The reason is that the complex case physically just looks too much ``electromagnetic" only. 

\paragraph{}
Let us look at a circle $t \to h_{t,x}(y) = h(y)$ if $y \in G$ is not in $G$ and
$h_{t,x}(x)= e^{i t} h(x)$.  Let $\lambda_k(h_{t,x})$ be the eigenvalues of $L$
defined by $G$ and the energy $h_{z,x}$. Let $W(\lambda_k)$ denote the winding number of
the path $t \to \lambda_k(h_{t,x})$. This is well defined because the
eigenvalues are never zero by the determinant formula. 

\begin{thm}[Geometric phase]
For every $x$, a circular deformation of the value $h_t(x) = h(x) e^{i t}$ produces a 
in general a {\bf nontrivial} {\bf permutation} of the eigenvalues, 
when $t$ goes around a circle from $0$ to $2\pi$. 
\label{4}
\end{thm}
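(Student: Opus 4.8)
The plan is to exhibit a single example $G$ with a single choice of $h$ and a single simplex $x$ for which the monodromy permutation is nontrivial; since the statement only claims nontriviality ``in general'', one witness suffices, but I would also explain the mechanism so the reader sees why nontriviality is generic. The mechanism is this: as $h_t(x)=e^{it}h(x)$ traverses a loop, the characteristic polynomial $p_t(\lambda)=\det(L(G,h_{t,x})-\lambda)$ has coefficients that are \emph{affine} in $h_t(x)$ (this is exactly the multi-affinity already used in the proofs of Theorems~\ref{1} and \ref{3}, since every matrix entry $L(y,z)$ is of the form $a_{yz}+h_t(x)$ or is constant). Hence each coefficient of $p_t$ traces a circle (or a point) in $\mathbb{C}$, and the map $t\mapsto p_t$ is a loop in the space of monic degree-$n$ polynomials. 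The eigenvalues $\lambda_k(t)$ are the roots, so the loop in coefficient space lifts to a loop in the configuration space $\mathrm{Conf}_n(\mathbb{C})/S_n$ of unordered $n$-tuples, provided the loop avoids the discriminant locus (which it does as long as the spectrum stays simple along the path; one can always perturb $h(x)$ slightly to arrange this, or pass to $h_t(x)=re^{it}h(x)$ for suitable $r$). The monodromy is then the image of this loop under the standard surjection $\pi_1(\mathrm{Conf}_n(\mathbb{C})/S_n)=B_n\to S_n$, and ``nontrivial permutation'' means precisely that this image is not the identity.

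The key steps, in order, are: (1) record that $t\mapsto L(G,h_{t,x})$ is a real-analytic loop of matrices whose characteristic polynomial has coefficients affine in $e^{it}$, hence $t\mapsto\lambda_k(t)$ is a (multivalued) analytic function with no branch point at the endpoints, so the endpoint values $\{\lambda_k(2\pi)\}$ are a permutation of $\{\lambda_k(0)\}$ — this defines the monodromy permutation and is forced by the nonvanishing of $\det L=\prod h(x)$ from Theorem~\ref{1}, which keeps $0$ out of the spectrum; (2) observe that the product $\prod_k\lambda_k(t)=\prod_{z\in G}h_{t,z}(z)=e^{it}\prod_z h(z)$ winds exactly once, so $\sum_k W(\lambda_k)=1$; in particular \emph{not all} winding numbers can be zero, and a nonzero winding number along a loop that returns to a simple configuration already forces the roots to have been permuted cyclically among the branch that carries the winding — this is the cleanest way to see nontriviality without computing a spectrum; (3) to make it concrete, compute explicitly in the smallest case where this bites, e.g. $G$ the simplicial complex of a single edge ($G=\{a,b,ab\}$, $n=3$) or a small triangle, with $h\equiv 1$ except that $h(x)$ is deformed, and check numerically/symbolically that the three eigenvalues of $L$ undergo, say, a transposition or a $3$-cycle as $t:0\to2\pi$; (4) conclude that the induced permutation on $G$ (via any chosen labeling of the starting eigenvalues by simplices) is nontrivial, which is the assertion.

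The main obstacle is step (3) made rigorous rather than numerical: one must actually pin down a specific $G$ and $x$ and \emph{prove} — not just observe on a plot — that the monodromy is a nonidentity element of $S_n$. The honest route is to use step (2): since $\sum_k W(\lambda_k)=1$ while each individual eigenvalue that \emph{does not} get permuted must have $W(\lambda_k)=0$ (a fixed point of the monodromy that stays isolated has trivial winding only if its own cycle is trivial — more precisely, on a $d$-cycle of eigenvalues the total winding contributed is an integer, and the sum over all cycles is $1$), the monodromy permutation \emph{cannot} be the identity for \emph{any} $G$ and \emph{any} $x$ and \emph{any} $h$: the identity would force every $W(\lambda_k)=$ integer with $\sum=1$, which is possible with the identity permutation only if some single $\lambda_k$ winds once around $0$ while returning to itself — but an eigenvalue returning to its own value after a loop in coefficient space that encircles the discriminant trivially can still wind around $0$, so this argument shows the permutation is ``usually'' nontrivial but needs the extra input that the loop is homotopically nontrivial in $\mathrm{Conf}_n/S_n$. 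So the real subtlety is separating ``the $0$-winding'' (a homology statement about $\mathbb{C}^*$) from ``the braid monodromy'' (a statement about $\mathrm{Conf}_n$); the safe fix is to drop the pretense of a clean general formula here and simply supply one explicit $G,h,x$ with a verified nontrivial transposition, deferring the precise group-theoretic bookkeeping to Theorem~\ref{4}'s companion results (D)–(E) on $\Pi(G,h_0)$.
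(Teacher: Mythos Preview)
Your plan lands on the same endpoint as the paper: the paper's entire proof is an explicit computation for $G=K_3$ with $h(x_k)=e^{2\pi i k/7}$ (the seventh roots of unity), verified numerically and displayed in a figure. So the core of your proposal---exhibit one concrete $(G,h,x)$ and check the monodromy---is exactly what the paper does, and your reading of ``in general'' as ``there exists a witness'' is the intended one.

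Your extra layer of explanation (coefficients of $p_t$ affine in $e^{it}$, the loop in $\mathrm{Conf}_n(\mathbb{C})/S_n$, the braid surjection $B_n\to S_n$, and the observation that $\prod_k\lambda_k(t)=e^{it}\prod_z h(z)$ forces total winding $1$) is correct and adds genuine insight that the paper does not supply. You are also right to flag, and then retract, the attempt to upgrade $\sum_k W(\lambda_k)=1$ into a proof of nontrivial monodromy: a single eigenvalue can wind once around $0$ and return to itself while all others stay put, so the homology-in-$\mathbb{C}^*$ statement does not control the braid.

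One concrete caution about your proposed witness: the paper remarks elsewhere that nontrivial permutations have \emph{not} been observed when the initial $h$ is real-valued. Your suggestion of $h\equiv 1$ on $G=K_2$ with one entry deformed therefore risks producing only the trivial permutation; the paper's choice of a genuinely complex starting configuration ($G=K_3$, $h$ the seventh roots of unity) is the safe example to carry out.
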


\paragraph{}
The proof of Theorem~(\ref{4}) is an explicit computation,
in a concrete situation. An example is $G=K_3$, the set of all non-empty
subsets of $\{1,2,3\}$. The simplicial complex $G$ contains the $7$ sets
$$ \{\{1\},\{2\},\{3\},\{1,2\},\{1,3\},\{2,3\},\{1,2,3\}\} \; . $$
We can take the energy values $h(x_k)=e^{2 \pi i k/7}$,
where $k$ goes from $1$ to $7$. These are the $7$'th {\bf root of unities}.

\begin{figure}[!htpb]
\scalebox{0.8}{\includegraphics{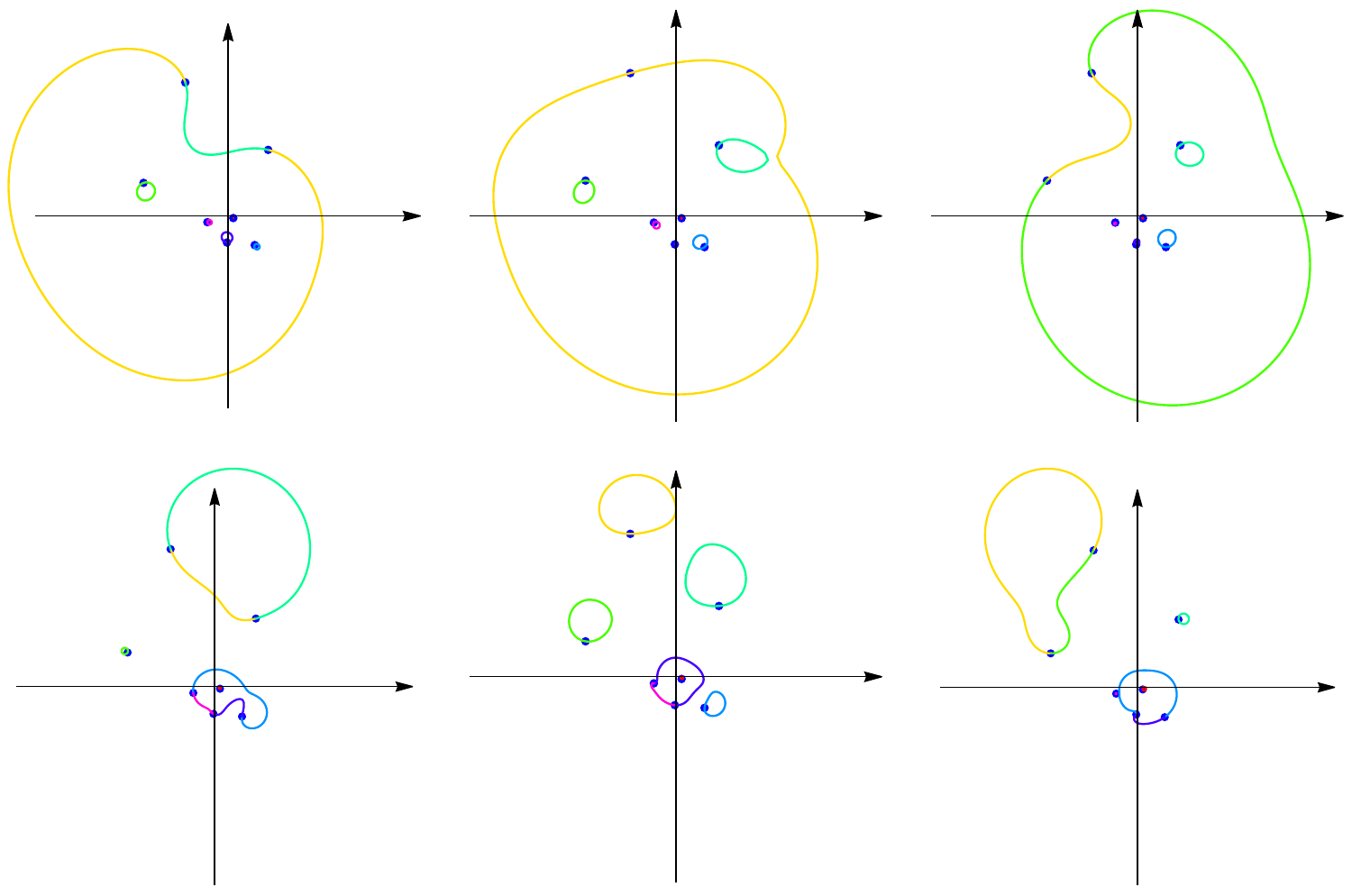}}
\label{spectralcurves}
\caption{
For $G=K_3$ and $h(x)$ taking the $7$'th roots of unity, we generate
the group $\Pi(G,h)= S_3 \times S_3$ which has $36$ elements. 
The figure shows the moves when turning wheels $1$ to $6$ by $360$ 
degrees each. 
}
\end{figure}

\paragraph{}
Remark: In the real case, we most of the time have a natural map
$\Psi: h  \in (\mathbb{R}^*)^n \to \lambda \in (\mathbb{R}^*)^n$,
if  $\lambda_k$ is the eigenvalue which has the property that the
circle $\lambda_k(\theta)$ under the deformation $h_\theta(x) = h(x) e^{i \theta}$
has non-zero winding number with respect to the origin $0 \in \mathbb{C}$.
It can however happen that two rotations produce the same deformation 
of the spectrum. It is still possible, when building up $G$ to associate
to each $x \in G$ a unique eigenvalue $\lambda_k(L)$. 

\paragraph{}
If we look at the one-parameter circle of energy functions $t \to h_{t,x}$,
the winding numbers are all integers which because the depend continuously on 
parameters are constant on the set of all energy functions. When deforming from 
the real case, then we can not hit a real eigenvalue until $\theta=\pi$. 
What is possible however is that if we deform a single $h(x)$ value along a circle
is that two eigenvalues turn around the origin. 
We initially thought that this is not possible. In which cases this is possible has
still to be investigated. 

\section{Result (E): Complex manifolds}

\paragraph{}
The {\bf general linear group} $GL(n,\mathbb{C})$ can be identified with 
an open subset of $\mathbb{C}^{n^2}$. 
It so is naturally a non-compact {\bf K\"ahler manifold}, because every complex sub-manifold
of a K\"ahler manifold with induced complex structure is K\"ahler. In our case, we have 
an explicit parametrization
$$   h \in \mathbb{C}^{n}  \to {\rm Symm}(n,\mathbb{C}) = \mathbb{C}^{n(n+1)/2} $$
of an $n$-dimensional complex manifold $M$ which has the property that 
the multiplicative subgroup $(\mathbb{C}^*)^n$ is mapped into 
$GL_{{\rm symm}}(n,\mathbb{C})$, the manifold space of symmetric complex 
$(n \times n)$-matrices.

\paragraph{}
Unlike manifolds of self-adjoint matrices, spaces of symmetric matrices are always K\"ahler
manifolds. In our case, the parametrization map $r: h \to L(h)$ is {\bf multi-linear}. 
Its {\bf rank} is the rank of the matrix $dr^T dr$ which is an integer only depending on $G$. 
Indeed, it agrees with the rank of $G$. 
We measure for positive dimensional simplicial complexes that the
determinant of the K\"ahler metric ${\rm det}(dr^T dr)$ is
divisible by $3$. It is clearly $1$ if $G$ is zero dimensional as then $r$ maps $h$ into a 
diagonal matrix. For complete complexes, we have the following ranks
$G=K_1$ has rank $3^0$, $G=K_2$ has rank $3^2$, $G=K_3$ has rank $3^9$ and $G=K_4$ has rank $3^{15}$. 
Looking up the integer sequence we expect the rank
for $G=K_{n+1}$ to be $a_n=3^{\sum_{k=1}^{n-1} B(n,k) (n-k)}$. It should be possible to 
prove this by induction. We have not yet done so. 

\paragraph{}
We can now look at the open submanifold $S$ of $GL_{{\rm symm}}(n,\mathbb{C})$ which consists
of matrices which have {\bf simple spectrum}. This is still a non-compact 
complex manifold of complex dimension $n(n+1)/2$. As the collision sets are of smaller dimension,
a random complex symmetric matrix is in $S$. (This is much less obvious in the real case 
\cite{TaoVu2017}). 
The K\"ahler manifold $S$ is dense, is connected and simply connected. 
As matrices with simple spectrum are diagonalizable and in general matrices over $\mathbb{K}=\mathbb{C}$
can be put into a {\bf Jordan normal form}, this 
follows from the connectedness and simply connectedness of the unitary group 
$U(n,\mathbb{C})$ for $n>1$. 

\paragraph{}
When looking at connection matrices $L(G,h)$, then we have $n$-dimensional complex 
manifold of matrices, also if we intersect it with $S$. We get then 
a complex sub-manifold of the K\"ahler manifold $S$ and is so K\"ahler when taking 
the induced complex structure. We actually have an explicit parametrization
$r: h=(h_1, \dots, h_n) \to L(G,h)$ and so also explicit coordinates and an
explicit K\"ahler bilinear form $g(h)=dr^T(h) \cdot dr(h)$, where $dr(h)$ is the {\bf Jacobian}
of $r: \mathbb{C}^n \to \mathbb{C}^{n(n+1)/2}$ at the point $h$. For a fixed set of sets $G$,
the open manifold $M=r((\mathbb{C}^*)^n) \cap \subset S$ consists of {\bf connection
Laplacians} which have simple spectrum. Here $\mathbb{C}^* = \mathbb{C} \setminus \{0\}$ is
the {\bf multiplicative group} in the field $\mathbb{C}$ of complex numbers.

\paragraph{}
We can now make non-trivial statements about the manifold $M$. 
In the $0$-dimensional case where $M = \{ x_1=\{ v_1 \}, \dots , x_n=\{ v_n \} \}$, 
the manifold $M$ is an open sub manifold $M=(\mathbb{C}^*)^n)$ consisting of all
vectors, for which all coordinates are non-zero and different. This manifold is connected
but not simply connected. The fundamental group is $\pi_1(M) = \mathbb{Z}^n$ everywhere.
It is a bit surprising that in in general, when the dimension of $G$ gets bigger,
the manifold $M(G,h)$ can have a non-commutative 
fundamental group and that it is not connected, as different components can have different
fundamental groups. We can compute them explicitly.

\paragraph{}
If 
$$ \Pi(G,h_0) = \{ (g_1,...,g_n), g_k^{n_k}=1, g_i^{n_i} g_j^{n_j} g_i^{-n_i} g_j^{-n_j} = 1 \} $$ 
is the {\bf finite presentation} of symmetry group of the spectrum defined by the above 
deformations, define the now infinite but still {\bf finitely presented group}
$$ \pi(G,h_0) = \{ (g_1,\dots ,g_n), g_i^{n_i} g_j^{n_j} g_i^{-n_i} g_j^{-n_j} = 1 \} $$
which is the free group with $n$ generators in which only the {\bf mixed relations} of $\Pi(G,h_0)$ are
picked as the relations. In the case when $\Pi(G,h_0)$ is the trivial group with one element, 
then $\pi(G,h_0)$ is the Abelian group $\mathbb{Z}^n$. 

\begin{thm}
The manifold $M=M(G)$ is in general not connected. The fundamental group
of the connected component containing $L(G,h_0)$ is the group $\pi(G,h_0)$. 
\label{(5)}
\end{thm}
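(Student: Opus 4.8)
The plan is to analyze the map $r:(\mathbb{C}^*)^n \to S$, $h \mapsto L(G,h)$, together with the eigenvalue map $\lambda: S \to \mathrm{Conf}_n(\mathbb{C})$ sending a simple-spectrum matrix to its unordered set of eigenvalues. First I would recall the two facts we may use freely: by Theorem~(\ref{1}) we have $\prod_k \lambda_k(h) = \prod_x h(x)$, so no eigenvalue ever vanishes, and $r$ is multilinear of full rank $=\mathrm{rank}(G)$, hence an immersion on the open dense locus, and $M = r((\mathbb{C}^*)^n)\cap S$ is an $n$-dimensional complex submanifold of the simply connected K\"ahler manifold $S$. The key structural observation is that the composite $\lambda\circ r$ is equivariant for the action of the torus $(\mathbb{C}^*)^n$ only through its ``phase'' part: writing $h_k = \rho_k e^{i\theta_k}$, turning the $k$-th phase around a full circle permutes the eigenvalues by a permutation $g_k\in S_n$ (this is exactly the content of Theorem~(\ref{4}) and the definition of $\Pi(G,h_0)$, whose relations record that $g_k^{n_k}=1$ and that disjoint-wheel turns commute).

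The core of the argument is to compute $\pi_1(M,h_0)$ by comparing the loop in $M$ traced by a closed path in parameter space with the induced motion of the spectrum. I would proceed as follows. (1) Show that the connected component $M_0$ of $M$ containing $L(G,h_0)$ is the image of a connected component of $\{h\in(\mathbb{C}^*)^n : L(G,h)\in S\}$; the parameter space $(\mathbb{C}^*)^n$ is itself connected, and the locus where the spectrum degenerates is a proper complex-analytic subset, but its removal can disconnect $(\mathbb{C}^*)^n$ — this is where non-connectedness of $M$ enters. (2) On $M_0$, build the covering: the spectrum-ordering cover $\widetilde{M_0}\to M_0$ obtained by choosing a continuous labelling of the $n$ eigenvalues has deck group the image of $\pi_1(M_0)$ in $S_n$ under monodromy, and that image is generated precisely by the phase-turn permutations $g_1,\dots,g_n$, i.e. it is $\Pi(G,h_0)$. (3) Identify the kernel of the monodromy $\pi_1(M_0)\to\Pi(G,h_0)$: a loop in $M_0$ lifts to a loop of labelled spectra, and since eigenvalues never collide and never hit $0$, such a loop is classified by the winding numbers $W(\lambda_1),\dots,W(\lambda_n)$ around the origin; these winding numbers are exactly the exponents on the generators $g_k$ in $\pi(G,h_0)$, and the mixed commutator relations $g_i^{n_i}g_j^{n_j}g_i^{-n_i}g_j^{-n_j}=1$ survive because the corresponding two-dimensional sub-tori of $(\mathbb{C}^*)^n$ are genuinely present in $M_0$, whereas the pure relations $g_k^{n_k}=1$ are \emph{not} relations in $\pi_1(M_0)$: turning wheel $k$ a full $n_k$ times returns to the same point of $S$ but winds the corresponding eigenvalue nontrivially, so it is a nontrivial loop. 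This is precisely the passage from $\Pi(G,h_0)$ to $\pi(G,h_0)$ by deleting the cyclic relations. Assembling, $\pi_1(M_0,h_0)\cong \pi(G,h_0)$.

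Concretely I would organize the write-up as: a lemma that $\pi_1(S)=1$ (citing diagonalizability plus simple connectivity of $U(n)$ for $n>1$, already sketched in the excerpt); a lemma that the fibration-like structure $M_0 \to r(\text{torus orbit})$ has fiber a product of punctured planes recording the winding data, giving a short exact sequence $1\to \mathbb{Z}^n \to \pi_1(M_0)\to \Pi(G,h_0)\to 1$ — wait, more carefully, the correct statement is that $\pi_1(M_0)$ is the group generated by lifts of the $g_k$ with only the mixed relations, which is $\pi(G,h_0)$ and which surjects onto $\Pi(G,h_0)$ with kernel the product of the cyclic subgroups $\langle g_k^{n_k}\rangle \cong \mathbb{Z}$; and finally the explicit verification on $G=K_3$ with $7$th roots of unity matching Figure~\ref{spectralcurves}, where $\Pi = S_3\times S_3$ forces $\pi(G,h_0)$ to be the corresponding RAAG-type group.

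The main obstacle I anticipate is step (3): rigorously showing that the winding-number classes are \emph{independent} in $\pi_1(M_0)$ and that there are no extra relations beyond the mixed commutators. The clean way is to exhibit, for each $k$, a homomorphism $\pi_1(M_0)\to\mathbb{Z}$ detecting $W(\lambda_{j})$ for the eigenvalue canonically attached to $x_k$ (using the ``building up $G$'' bijection between simplices and eigenvalues mentioned after Theorem~(\ref{4})), and to check these $n$ homomorphisms together with the monodromy to $\Pi(G,h_0)$ separate the generators — equivalently, that the parametrizing torus $(\mathbb{C}^*)^n$, restricted to the component, injects its fundamental group's ``commutator-quotient data'' into $\pi_1(M_0)$. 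The subtlety is purely that $r$ is not injective and not a fibration, so one must argue via the immersion property and a transversality/general-position argument that the only identifications $r$ makes on loops are the ones forced by full phase rotations (the $g_k^{n_k}$ becoming inner/trivial in $\Pi$ but not in $\pi_1$), which is exactly where the distinction between $\Pi(G,h_0)$ and $\pi(G,h_0)$ is pinned down.
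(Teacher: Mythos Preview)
Your approach is genuinely different from the paper's, and considerably more structured. The paper's proof is very brief: for non-connectedness it simply observes (numerically, on $G=K_3$) that distinct base points $h_0$ can yield distinct permutation groups $\Pi(G,h_0)$, and since $\Pi$ is locally constant this forces $M$ to have several components. For the fundamental-group claim the paper argues directly through the parametrization $r$: every loop in $M$ is the $r$-image of a loop in $(\mathbb{C}^*)^n\simeq\mathbb{T}^n$, and no multiple of a single-wheel generator $\gamma$ can be null-homotopic in $M$, because lifting such a homotopy back through the linear map $r$ would contradict $\pi_1((\mathbb{C}^*)^n)=\mathbb{Z}^n$. That is essentially the whole argument; the paper does not invoke monodromy, coverings, or winding-number homomorphisms at all.

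What your route buys is a mechanism that at least \emph{names} both bounds on $\pi_1(M_0)$: the monodromy surjection onto $\Pi(G,h_0)$ explains why the mixed commutator relations must hold, and the winding-number functionals explain why the cyclic relations $g_k^{n_k}=1$ do \emph{not} hold in $\pi_1$. The paper's route is quicker but only handles torsion-freeness of individual generators and tacitly assumes the rest. In fact both arguments share the gap you honestly flag in your step~(3): neither one rules out further relations among the generators beyond the mixed commutators. Your covering-space formulation isolates precisely what is missing (injectivity of the natural map $\pi(G,h_0)\to\pi_1(M_0)$), whereas the paper leaves that point implicit. So your proposal is not wrong, just more ambitious than what the paper actually carries out; the obstacle you identify is real and is not resolved in the paper either.
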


\paragraph{}
\begin{proof}
To see that the manifold is in general not connected,
take a fixed complex like $G=\mathbb{K}_3$, and notice that there can be
different vectors $h=(a,b,c) \in (\mathbb{C}^*)^2$ for which the groups
are different. We have included Mathematica code which allows to verify this numerically.
[To prove this mathematically, one would have to establish a computer assisted
proof using interval arithmetic, establishing that the deformations really do
what we see. It might be simpler to actually understand this theoretically more 
first and understand {\bf why} the eigenvalues get permuted at all.]
The groups are obviously constant on each connected component,
so that $M$ must have different connectivity components. 
To see that $\pi(G,h_0)$ is the fundamental group, we note that every closed
curve in $M$ can be described as a closed curve in the parameter domain 
$(\mathbb{C}^*)^n$  but that unlike in that parameter manifold 
which is homotopic to the $n$-torus $\mathbb{T}^n$, the 
image has now a more complicated topology. \\
Let $\gamma$ be a closed path in $M$ starting at $h_0$ which is obtained
from a generator, where the $h$ value is turned at a single simplex $x \in G$.
It is not possible that some multiple $\gamma + \cdots + \gamma$ of the curve
(doing the loop several times) is homotopic to a point because such a 
deformation would produce a deformation 
$r^{-1}(M) = (\mathbb{C}^*)^n$ which is homotopic to $\mathbb{T}^n$. This is
not possible for the linear map $r$.
\end{proof}

\paragraph{}
Note that if $G$ is $0$-dimensional, the finite permutation group $\Pi(M,h_0)$ is 
trivial for all $h_0 \in  (\mathbb{C}^*)^n$; it has only one element, the identity.
When looking at this from the point of view of finitely presented groups, 
then the fundamental group with generators $(g_1, \dots, g_n)$ has then all the
{\bf pair relations} $r_{ij} = g_j g_k g_j^{-1} g_k^{-1}$ so that the fundamental group is 
$\mathbb{Z}^n = \{ (g_1, \dots ,g_n) | (r_{12}, \dots ,r_{(n-1),n}) \}$ 
which is the free group with generators $g_1, \dots, g_n$ modulo these relations. 
In other words, the fundamental group $\mathbb{Z}^n$ is then the Abelianization of the 
free group with $n$ generators.  In general, some of these pair relations become 
more complicated. 

\section{Examples}

\paragraph{}
If $G=K_2=\{ \{1,2\},\{1\},\{2\} \}$ and
$h=[U,V,W]$ are the energies (field values) in $\mathcal{K}$, we
have the Study determinant ${\rm det}(L)=|U V W| = {\rm det}(g)$ and
$$
L=  \left[
                  \begin{array}{ccc}
                   U & 0 & U \\
                   0 & V & V \\
                   U & V & U+V+W \\
                  \end{array}
                  \right],  
g = \left[
                  \begin{array}{ccc}
                   U+W & W & -W \\
                   W & V+W & -W \\
                   -W & -W & W \\
                  \end{array}
                  \right] $$
so that
$$ \overline{g} L    =   \left[
                  \begin{array}{ccc}
                   |U|^2 & 0 & |U|^2-|W|^2 \\
                   0 & |V|^2 & |V|^2-|W|^2 \\
                   0 & 0 & |W|^2 
                  \end{array}
                  \right] \; . $$

\paragraph{} 
For $G=\{\{1\},\{2\},\{3\},\{1,2\},\{2,3\}\}$ and $h=[U,V,W,P,Q]$,
the study determinant of $L$ and $g$ is $|U,V,W,P,Q]$. Then 
$$  L=\left[
                  \begin{array}{ccccc}
                   U & 0 & 0 & U & 0 \\
                   0 & V & 0 & V & V \\
                   0 & 0 & W & 0 & W \\
                   U & V & 0 & P+U+V & V \\
                   0 & V & W & V & Q+V+W \\
                  \end{array}
                  \right] $$
$$  g=\left[
                  \begin{array}{ccccc}
                   P+U & P & 0 & -P & 0 \\
                   P & P+Q+V & Q & -P & -Q \\
                   0 & Q & Q+W & 0 & -Q \\
                   -P & -P & 0 & P & 0 \\
                   0 & -Q & -Q & 0 & Q \\
                  \end{array}
                  \right]  $$
so that 
$$ \overline{g} L  =  \left[ 
                  \begin{array}{ccccc}
                   |U|^2 & 0 & 0 & |U|^2-|P|^2 & 0 \\
                   0 & |V|^2 & 0 & |V|^2-|P|^2 & |V|^2-|Q|^2 \\
                   0 & 0 & |W|^2 & 0 & |W|^2-|Q|^2 \\
                   0 & 0 & 0 & |P|^2 & 0 \\
                   0 & 0 & 0 & 0 & |Q|^2 \\
                  \end{array}  \right]  \; . $$

\paragraph{} 
For $G=\{\{1\},\{2\},\{3\},\{1,2\},\{2,3\}\}$ and $h=[U,V,W,P,Q]$,
the Study determinant of $L$ and $g$ is $|UVWPQ|$. Then
$$  L=\left[
                  \begin{array}{ccccc}
                   U & 0 & 0 & U & 0 \\
                   0 & V & 0 & V & V \\
                   0 & 0 & W & 0 & W \\
                   U & V & 0 & P+U+V & V \\
                   0 & V & W & V & Q+V+W \\
                  \end{array}
                  \right] $$
$$ g=\left[
                  \begin{array}{ccccc}
                   P+U & P & 0 & -P & 0 \\
                   P & P+Q+V & Q & -P & -Q \\
                   0 & Q & Q+W & 0 & -Q \\
                   -P & -P & 0 & P & 0 \\
                   0 & -Q & -Q & 0 & Q \\
                  \end{array}
                  \right]  $$
so that
$$ \overline{g} L  =  \left[ 
                  \begin{array}{ccccc}
                   |U|^2 & 0 & 0 & |U|^2-|P|^2 & 0 \\
                   0 & |V|^2 & 0 & |V|^2-|P|^2 & |V|^2-|Q|^2 \\
                   0 & 0 & |W|^2 & 0 & |W|^2-|Q|^2 \\
                   0 & 0 & 0 & |P|^2 & 0 \\
                   0 & 0 & 0 & 0 & |Q|^2 \\
                  \end{array}  \right]  \; . $$

\paragraph{}
For $G=\{\{1\},\{2\},\{3\},\{4\},\{1,2\},\{2,3\},\{2,4\},\{3,4\},\{2,3,4\}\}$
and $h=[P,Q,R,U,V,W,X,Y,Z]$ the determinant is $P Q R U V W X Y Z$. Then
\begin{tiny}
$$ L = \left[
                  \begin{array}{ccccccccc}
                   U & 0 & 0 & 0 & U & 0 & 0 & 0 & 0 \\
                   0 & V & 0 & 0 & V & V & V & 0 & V \\
                   0 & 0 & W & 0 & 0 & W & 0 & W & W \\
                   0 & 0 & 0 & P & 0 & 0 & P & P & P \\
                   U & V & 0 & 0 & Q+U+V & V & V & 0 & V \\
                   0 & V & W & 0 & V & R+V+W & V & W & R+V+W \\
                   0 & V & 0 & P & V & V & P+V+X & P & P+V+X \\
                   0 & 0 & W & P & 0 & W & P & P+W+Y & P+W+Y \\
                   0 & V & W & P & V & R+V+W & P+V+X & P+W+Y & P+R+V+W+X+Y+Z \\
                  \end{array}
                  \right] $$
\end{tiny}
and 
\begin{tiny}
$$ g = \left[ 
                  \begin{array}{ccccccccc}
                   Q+U & Q & 0 & 0 & -Q & 0 & 0 & 0 & 0 \\
                   Q & Q+R+V+X+Z & R+Z & X+Z & -Q & -R-Z & -X-Z & -Z & Z \\
                   0 & R+Z & R+W+Y+Z & Y+Z & 0 & -R-Z & -Z & -Y-Z & Z \\
                   0 & X+Z & Y+Z & P+X+Y+Z & 0 & -Z & -X-Z & -Y-Z & Z \\
                   -Q & -Q & 0 & 0 & Q & 0 & 0 & 0 & 0 \\
                   0 & -R-Z & -R-Z & -Z & 0 & R+Z & Z & Z & -Z \\
                   0 & -X-Z & -Z & -X-Z & 0 & Z & X+Z & Z & -Z \\
                   0 & -Z & -Y-Z & -Y-Z & 0 & Z & Z & Y+Z & -Z \\
                   0 & Z & Z & Z & 0 & -Z & -Z & -Z & Z \\
                  \end{array} \right]  \; . $$
\end{tiny}
Then 
\begin{tiny}
$$ \overline{g} L = \left[
                  \begin{array}{ccccccccc}
                   |U|^2 & 0 & 0 & 0 & |U|^2-|Q|^2 & 0 & 0 & 0 & 0 \\
                   0 & |V|^2 & 0 & 0 & |V|^2-|Q|^2 & |V|^2-|R|^2 & |V|^2-|X|^2 & 0 & -|R|^2+|V|^2-|X|^2+|Z|^2 \\
                   0 & 0 & |W|^2 & 0 & 0 & |W|^2-|R|^2 & 0 & |W|^2-|Y|^2 & -|R|^2+|W|^2-|Y|^2+|Z|^2 \\
                   0 & 0 & 0 & |P|^2 & 0 & 0 & |P|^2-|X|^2 & |P|^2-|Y|^2 & |P|^2-|X|^2-|Y|^2+|Z|^2 \\
                   0 & 0 & 0 & 0 & |Q|^2 & 0 & 0 & 0 & 0 \\
                   0 & 0 & 0 & 0 & 0 & |R|^2 & 0 & 0 & |R|^2-|Z|^2 \\
                   0 & 0 & 0 & 0 & 0 & 0 & |X|^2 & 0 & |X|^2-|Z|^2 \\
                   0 & 0 & 0 & 0 & 0 & 0 & 0 & |Y|^2 & |Y|^2-|Z|^2 \\
                   0 & 0 & 0 & 0 & 0 & 0 & 0 & 0 & |Z|^2 \\
                  \end{array} \right] \; . $$
\end{tiny}
This illustrates for example ${\rm tr}(\overline{g} L) = \sum_x |h(x)|^2$.

\paragraph{}
Let $G=\{\{1\},\{1,3,4\},\{1,4,5\},\{4\},\{1,4\}\}$. It is a set of sets 
and not a simplicial complex. It is also not ordered as we usually do. 
Now take the energy values $h={2,4,3,-1,X}$, where $X$ is a variable. We want
to illustrate the determinant formula, but not assume any associativity
or commutativity. When looking at the Leibniz determinant we get
a complicated expression which simplifies in this case where only one 
variable appears to $-24 X$.

\paragraph{}
For $G=\{\{1,3,4\},\{4\}\}$ with $h=(1,X)$ we have
$$ L = \left[
                  \begin{array}{cc}
                   X+1 & X \\
                   X & X \\
                  \end{array}
                  \right], 
  g = \left[
                  \begin{array}{cc}
                   1 & 1 \\
                   1 & X+1 \\
                  \end{array}
                  \right] $$
The computation
$$ g.L = \left[
                  \begin{array}{cc}
                   2 X+1 & 2 X \\
                   (X+1)^2 & X (X+2) \\
                  \end{array}
                  \right] $$ 
shows that even if $X=1$, we do not have a diagonal matrix.

\paragraph{}
For the simplicial complex
$G=\{\{1,2\},\{2,3\},\{1\},\{2\},\{3\}\}$, which is
ordered from larger dimensions to smaller dimensions and
$h=(1,1,1,1,X)$, we have
$$ L =  \left[ 
                  \begin{array}{ccccc}
                   3 & 1 & 1 & 1 & 0 \\
                   1 & X+2 & 0 & 1 & X \\
                   1 & 0 & 1 & 0 & 0 \\
                   1 & 1 & 0 & 1 & 0 \\
                   0 & X & 0 & 0 & X \\
                  \end{array} \right ] \; $$
and 
$$ g = \left[
                  \begin{array}{ccccc}
                   1 & 0 & -1 & -1 & 0 \\
                   0 & 1 & 0 & -1 & -1 \\
                   -1 & 0 & 2 & 1 & 0 \\
                   -1 & -1 & 1 & 3 & 1 \\
                   0 & -1 & 0 & 1 & X+1 \\
                  \end{array}
                  \right] \; , $$
so that
$$ g L =  \left[
                  \begin{array}{ccccc}
                   1 & 0 & 0 & 0 & 0 \\
                   0 & 1 & 0 & 0 & 0 \\
                   0 & 0 & 1 & 0 & 0 \\
                   0 & 0 & 0 & 1 & 0 \\
                   0 & X^2-1 & 0 & 0 & X^2 \\
                  \end{array}
                  \right]  \; . $$
Since the order was up to down, the matrix $g L$
is lower triangular. 

\section{Illustrations}

\begin{figure}[!htpb]
\scalebox{0.6}{\includegraphics{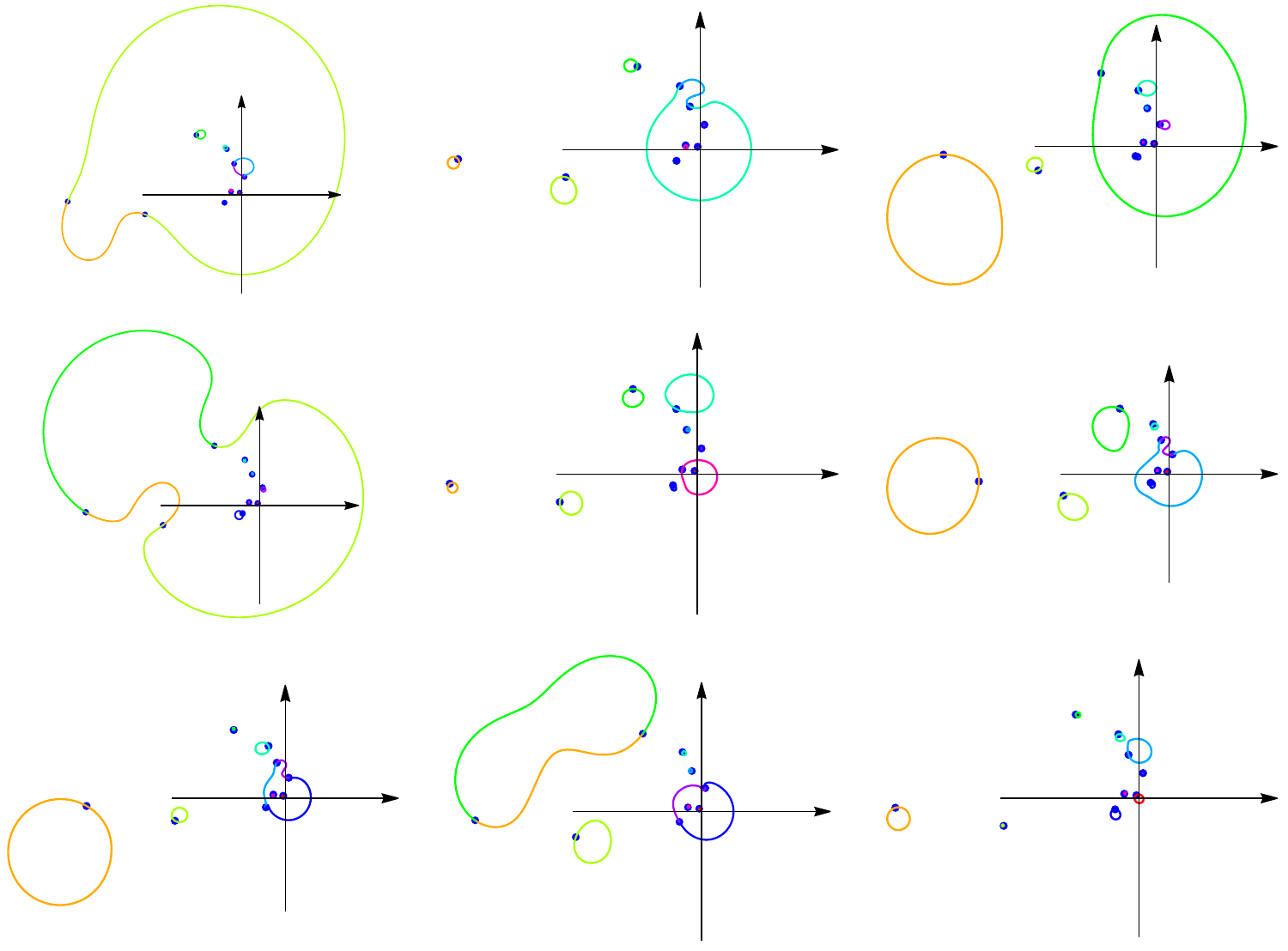}}
\label{spectralcurves}
\caption{
The spectral curves $\theta \to \lambda_k(G,h_{x,\theta})$ are seen
in the complex plane $\mathbb{C}$, when the energy $h(x)$ is rotated around
by $2\pi$ at the simplex $x$. In this case, $n=|G|=9$. For each $x$, we see here
exactly one eigenvalue that circles around the origin. If this is the case,
we can attach to each simplex a unique spectral value $\lambda(x)$
and keep track of this correspondence even if there should be spectral
collisions. The spectrum of $L$ is a priori only a set. 
}
\end{figure}

\begin{figure}[!htpb]
\scalebox{0.6}{\includegraphics{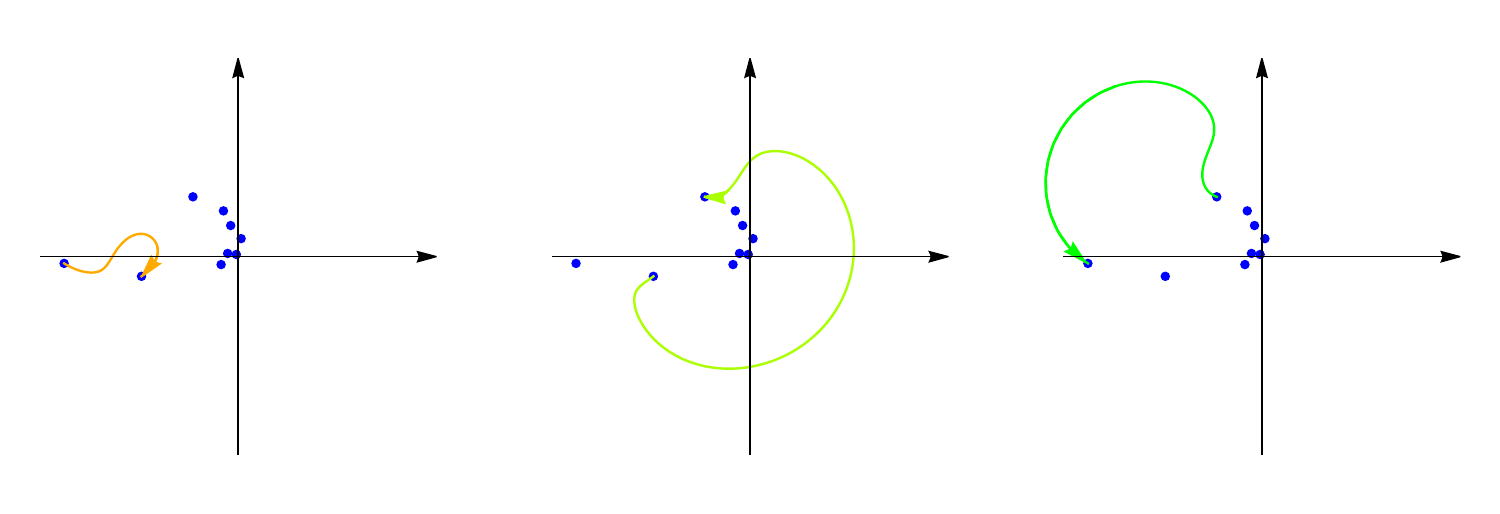}}
\label{geometricphase}
\caption{
Deforming the function value $h(x)$ at some simplex $x$ by
multiplying with $e^{it}$ produces a deformation of the eigenvalues.
It is a bit surprising  as we would have expected to have each eigenvalue
move on a circle.  Here is an example, where a cyclic permutation 
of $3$ eigenvalues appears. This means that we can assign to that simplex $x$ not 
just one particle (eigenvector) but that 3 particles are tied together
by symmetry.  We have so far not seen this when $h$ takes values in the real case.
}
\end{figure}

\begin{figure}[!htpb]
\scalebox{0.8}{\includegraphics{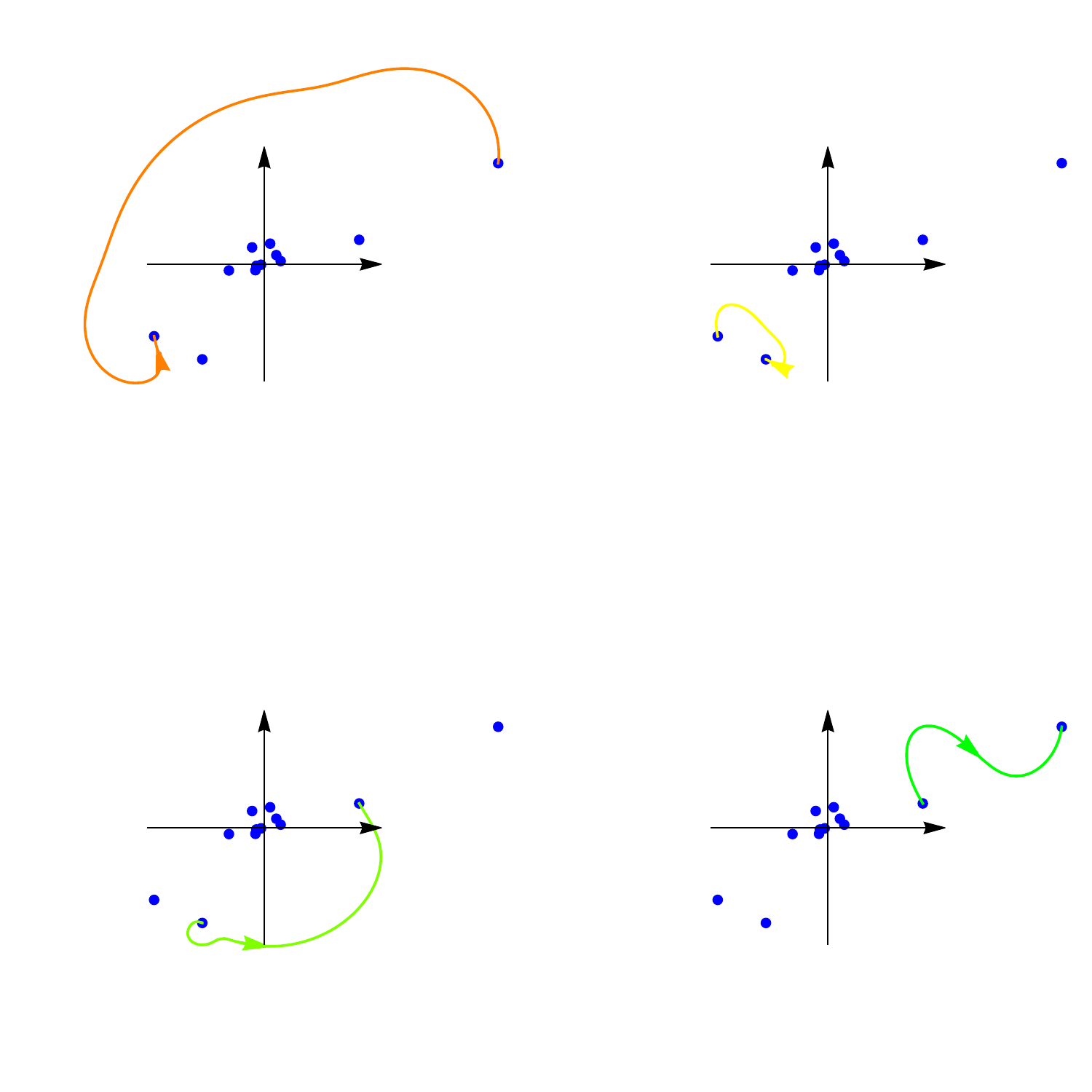}}
\label{geometricphase2}
\caption{
A case with $h$ taking values in $\mathbb{U}=\mathbb{T}$ in 
the $\mathbb{K}=\mathbb{C}$ case. The deformation 
of one of the $h$-values permutes $4$ eigenvalues. Together
with other wheels turning, we can realize the full permutation
group of these four eigenvalues. 
}
\end{figure}

\section{Discussion}

\paragraph{}
In physics, a complex-valued function $\psi: M \to \mathbb{C}$ on
a geometry is also seen as a {\bf wave} because quantum mechanics is
also wave mechanics. A $\mathbb{T}$-valued function can be seen
as the section of a fibre bundle for which the gauge group is the
circle $\mathbb{T}$. In the case of quaternions, one has a non-Abelian gauge
field situation like $\mathbb{U}=SU(2)$. We still have to see, 
whether one can get any type of physical content from the theorems. 
We can look at the {\bf Schr\"odinger} equation $i \hbar \psi_t = H \psi_t$ 
for example for some self-ajoint $H$ and then feed in $h(x) = \psi_t(x)$, 
producing an operator $L_t$ with spectrum and K\"ahler manifolds $M_t$. 
Can we have situations where the motion of the wave produces topology changes
of $M_t$?

\paragraph{}
Motivated from {\bf quantum mechanics} and {\bf gauge field theories} 
it was natural to look at functions on 
a finite abstract simplicial complex $G$ by assigning to each of the $n$ simplices $x$
of $G$ an element $h(x)$ of a division or operator algebra $\mathbb{K}$ and look at the corresponding
connection Laplacian $L$. Especially interesting should be the case when we use unitary operators
that is if $h$ is $\mathbb{U}$-valued. 
More generally, motivated maybe also from number theory, one can look at any ring $\mathbb{K}$, 
commutative or not and look at $\mathbb{K}$-valued functions. 
Note however that unlike for gauge field theories, where only the multiplicative structure of
a Lie group is used, the field object $(G,h)$ uses both the additive as well as the 
multiplicative structure of the algebra $\mathbb{K}$. 

\paragraph{}
When using $\mathbb{C}$-valued energy values $h(x)$, the map $G \to \mathbb{C}$
can also be seen as a ``quantum wave". The $n \times n$ connection matrix $L$ that is built 
from the wave has now a spectrum. 
The map $h \to L$ is an explicit linear parametrization of a complex manifold 
of $n \times n$ matrices for which the determinant is the product of the $h$ values.
For $0$-dimensional complexes, we just have $(h_1,\dots, h_n) \to L={\rm Diag}(h_1,\dots,h_n)$. 
What happens here is that for every wave $h$ we have operators attached, similarly as in 
quantum field theories. Now, it would be interesting to know whether the spectrum of $L(G,h)$
has any physical content. As a wave now also defines complex manifolds we should
expect the manifolds to not change fundamentally in topology when the wave evolves. Or then
that something dramatic changes if the manifolds change topology. 

\paragraph{}
If the ring $\mathbb{K}$ is non-commutative, like for
quaternion $\mathbb{H}$-valued fields, one also needs to work with a fixed
order on $G$. Geometry becomes non-commutative. 
More references on linear algebra in non-commutative settings 
is contained in \cite{Rodman2014,Voight2020,Aslaksen1996}.
The Leibniz determinant is $\mathbb{K}$-valued), the 
Study determinant \cite{Study1920,CohenDeLeo2000} is
real valued and the 
Dieudonn\'e determinant \cite{Dieudonne1943,ArtinGeometricAlgebra,RosenbergKTheory,Brenner1968}
takes values in the Abelianization $\overline{\mathbb{K}}$ of $\mathbb{K}$. 
This works especially for quaternions $\mathbb{K}=\mathbb{H}$.
In the case of octonions $\mathbb{K}=\mathbb{O}$, which is no more a ring, 
the non-associativity complicates the linear algebra a bit more. 
There had been early physical motivation for non-associative structures
like \cite{JordanNeumannWigner}. The Octonions have been at various places been 
seen associated with gravity. These ideas are exciting \cite{Baez2002,
Gursey,DixonDivisionalgebras,ConwaySmith,Baez2002,Furey2016,RowlandsRowlands}.

\paragraph{}
An other goal of this note was to point out a remarkable 
{\bf geometric spectral phase phenomenon} in the case $\mathbb{K}=\mathbb{C}$,
where one can use the usual determinant and where one also has well defined eigenvalues 
and eigenvectors.  Given a wave $h: G \mathbb{C}$, we can define a non-trivial 
map from the fundamental group of $\mathcal{U}^n = \mathbb{T}^n$ to the permutation 
group of the spectrum of $L$. Think of turning a wheel at a simplex $x$ and 
rotating the value of the field $h(x)$ there along a circle 
$h_{\theta}(x) = e^{i \theta} h(x)$ and leaving all other $h_{\theta}(y)=h(y)$ for $y \neq x$. 
This produces a closed loop which rotating the wave $h$ along a circle. 

\paragraph{}
A bit surprisingly, even so at the end of the turn, while the matrix $L(h_{2\pi},G)=L(h_0,G)$ 
are the same, the eigenvalues of $L$ were in general permuted. 
The permutation group generated like this can be non-Abelian. 
It is not explored much yet. We have no ideas which groups can appear. 
 In the illustration section we only see example but we have code which allows
to experiment. If we want to associate eigenvectors of eigenvalues of $L$ 
with ``particles" generated by the wave $h$, the cycle structure of 
the group $\Pi(G,h)$ {\bf groups of eigenvalues and so the particles} naturally.

\paragraph{}
The interest in division algebras is first of all warranted by {\bf Hurwitz's theorem} 
which states that $\mathbb{R},\mathbb{C},\mathbb{H}$ or $\mathbb{O}$ is the 
complete list of {\bf normed real division algebras}. 
Nature appears to have a keen interest in these algebras: 
{\bf wave functions} in quantum mechanics are $\mathbb{C}$-valued, 
or $\mathbb{H}$-valued if one looks at {\bf spinors}. 
The units in $\mathbb{C}$ are the gauge group of electromagnetism, 
the units in $\mathbb{H}$ to the gauge group $SU(2)$ of the 
electro-weak interaction. The group 
$SU(3)$ naturally acts on $\mathbb{H}$ and also appears prominently in the construction of {\bf positive
curvature manifolds}, a category which is closely related to division algebras: all known examples in
that class are either spheres, or projective spaces over division algebras, or then one of four exceptions 
which are all $SU(3)$ based flag manifolds. The suggestion to link octonions $\mathbb{O}$ with gravity 
has appeared in various places (see the references in \cite{PosCurvBosons}). 

\paragraph{}
Each of the extensions $\mathbb{R} \to \mathbb{C} \to \mathbb{H} \to \mathbb{O}$ are algebraically and 
topologically essential: from $\mathbb{R}$ to $\mathbb{C}$, we make the unit sphere connected, 
from $\mathbb{C}$ to $\mathbb{H}$ we make the unit sphere simply connected, from 
$\mathbb{H}$ to $\mathbb{O}$, we also trivialize the second cohomology which manifests that there is
no Lie group structure any more on $\mathbb{S}^7$. 
Algebraically, we gain from $\mathbb{R} \to \mathbb{C}$ completeness, from 
$\mathbb{C} \to \mathbb{H}$ we lose not only commutativity, but the fundamental theorem of algebra.
From $\mathbb{C} \to mathbb{O}$ we also lose associativity and higher dimensional projective spaces.
As we will see already the transition from $\mathbb{R} \to \mathbb{C}$ is interesting for energized
complexes $(G,h)$ because the action of $\mathbb{U}$ on individual energy values produces a non-trivial
action on spectrum. 

\paragraph{}
We expect the spectral phase phenomenon to disappear again when looking at 
$\mathbb{K}=\mathbb{H}$ because the unit sphere in quaternions is simply connected. 
The assignment of eigenvalues is however is more problematic already in the 
quaternions case because there is not a strong fundamental 
theorem of algebra. The equation $x^2=-1$ for example has lots of solutions.

\paragraph{}
Division algebras are also closely linked to {\bf spinors}. The {\bf spin groups} ${\rm Spin}(p,1)$
are double covers of the indefinite Lie group $SO(p,1)$. They appear in Lorentzian space time: 
${\rm Spin}(2,1)=SL(2,\mathbb{R})$ in 2+1 space-time 
dimensions, ${\rm Spin}(3,1) = SL(2,\mathbb{C})$ in $3+1$ space-time dimension,
${\rm Spin}(5,1)=SL(2,\mathbb{H})$ in $5+1$ space-time dimension and 
${\rm Spin}(9,1)=SL(2,\mathbb{O})$ in $9+1$ space-time dimensions \cite{NcatlabSpinors}.

\paragraph{}
Spin groups are traditionally built through {\bf Clifford algebras}, a construct which
generalizes exterior algebras. In combinatorics, functions $h: G \to \mathbb{R}$ appear naturally as 
part of the discrete exterior algebra. There is no need to use a Clifford algebra to build it.
A $\mathbb{K}$-valued functions $h$ on a simplicial complex is in some sense a {\bf spinor-valued spinors}. 
Functions on sets of dimension $k$ are the $k$-forms, the 
{\bf exterior derivative} $d$ defines the {\bf Dirac operator} $D=d+d^*$ producing the block diagonal 
Hodge Laplacian $H=D^2 = \oplus H_k$ with ${\rm dim}({\rm ker}(H_k))=b_k$ 
is the $k$'th Betti number of $G$.  Many features appear to generalize however
\cite{KnillILAS,AmazingWorld,knillmckeansinger,DiscreteAtiyahSingerBott}.

\paragraph{}
The exterior calculus $G$ leads to the Barycentric refinement 
$\Gamma=(V,E)$ of $G$, where $V$ is the set of sets 
in $G$ and $E$ the set of pairs where one is contained in the other. This graph $\Gamma$ 
has a natural simplicial complex structure which is also called 
the {\bf Barycentric refinement} of $G$. One can then iterate the process of taking Barycentric 
refinements \cite{KnillBarycentric,KnillBarycentric2} to get universal fixed points 
which only depend on the maximal dimension of $G$.
We could extend the Barycentric refinement map $G \to G_1$ to energized 
complexes $(G,h) \to (G_1,h_1)$, where $h_1(x_1) = C \sum_{y \in x_1} h(y)$ is 
scaled so that ${\rm det}(L)=1$ and hope to get a fixed point. 
An other approach offers itself if we can assign for every  $G$ there exists 
unique solution $h:G \to \mathbb{R}$ of $\Psi(h) = \lambda$ on ${\rm det}(L)=1$. 
We would also like this get {\bf universal energized limit} $(G,h)$

\paragraph{}
Division algebras are also of interest in number theory and differential geometry. 
The primes in division algebra appear to have some combinatorial relations 
with Lepton and Hadron structures \cite{ParticlesPrimes}. 
 Also when studying the known even-dimensional positive curvature manifold 
types $\mathbb{RP}^{2d},\mathbb{S}^{2d}, \mathbb{HP}^d,\mathbb{OP}^2,
\mathbb{W}^6,\mathbb{E}^6,\mathbb{W}^{12},\mathbb{W}^{24}$, they have a curious
relation with force carriers in physics, where four exceptional positive 
curvature manifolds have a more complex cohomology, linking such manifolds 
with {\bf force carriers} having positive mass \cite{PosCurvBosons}. 
While these could well be just structural coincidences, it could also be a 
hint, that nature likes division algebras for building structure.

\paragraph{}
We saw here that the now order-dependent multiplicative energy $\prod_{x \in G} |h(x)|$ has
a meaning as a Study determinant of $L$ and that $H(G)=\sum_{x \in G} h(x)$ 
is the sum of the matrix entries of $g$.  
In this non-commutative setting like for quaternion-valued $h$, we had to recheck the proofs,
once an order on $G$ is given. This exercise also helped a bit to clarify more the proof of
unimodularity result originally found in 2016. Or the result that
if the energies take values in the group of units $SO(2)=U(1)$ of $\mathcal{C}$, or the 
units $SU(2)$ of $\mathcal{H}$, then $\overline{g}$ is the inverse of $L$. 
Especially, if the energy function $h$ takes values in the units in the 
ring $\mathbb{K}$ of integers of a cyclotomic field $\mathbb{K}$, then $L$ and its inverse are defined 
over $\mathbb{R}$. One can also imagine to use number fields within $\mathbb{C}$ 
like {\bf Gaussian integers} or {\bf Eisenstein integers}. The matrices $L,g$ take then values
in these integers at least if if the product of the $h$ values is a unit. 

\paragraph{}
The results extend algebraically to the {\bf strong ring} $\mathcal{G}$
generated by energized simplicial complexes, where $+$ is the disjoint
union of complexes and $\times$ the Cartesian product. We have
$\chi(G + H) = \chi(G) + \chi(H)$ and
$\chi(G \times H) = \chi(G) \chi(H)$ and
$L(G \times H) = L(G) \otimes L(H)$ and
${\rm det}(L(G \times H)) = {\rm det}(L(G)) {\rm det}(L(H))$.
We already have a shade of non-commutativity there because the connection
Laplacians are for the product the tensor products of the Laplacians. So,
the algebra in the strong ring naturally also calls for $\mathbb{K}$ to be
an algebra and we do not have to insist on commutativity of $\mathbb{K}$. 

\section{Questions}

\paragraph{}
For $\mathbb{K}=\mathbb{C}$, given an order on $G$, we have a natural {\bf spectral map}
$$ \Lambda: (h(x_1),\dots, h(x_n)) \mapsto (\lambda(x_1),\dots, \lambda(x_n)  \; . $$
The eigenvalue match of the last entry can be obtained by turning on the energy $h(x_n)$ and
follow track eigenvalues during the deformation.
Where is it invertible? Already for  $G=K_2=\{ \{1\},\{2\},\{1,2\} \}$
with energy function $h(\{1\})=a,h(\{2\})=b, h(\{1,2\})=c$ that the
characteristic polynomial is $p_{L}(q) = a b c-3 a b q-a c x+2 a q^2-b c q+2 b q^2+c q^2-q^3$,
that the Jacobian determinant of $\Lambda$ at $(a,b,c)=(1,1+\epsilon,1)$ is 
$$ \frac{\epsilon (4\epsilon +3)}{\sqrt{48+\epsilon (\epsilon (32 \epsilon (\epsilon+3)+157)+136)}} \; . $$
meaning that the determinant is zero in the counting case. The map $\Lambda$ is already
a very complicated map from $\mathbb{C}^3 \to \mathbb{C}^3$ 
but it is computable as we have formulas for the roots of cubic equations.
(A) Can one hear the energy $h$ of $G$ from the spectrum $\lambda$ of $L$? 
(B) If not, can one isospectrally deform the energy $h$? 
(C) What are the non-zero energy equilibria, fixed points $\Psi(h)=h$? 

\paragraph{}
Still in the case $\mathbb{K}=\mathbb{C}$, we can look at the
parametrization map $r: h=(h_1, \dots, h_n) \to M(n,\mathbb{C}$. 
Its rank is the rank of the $n \times n$ matrix $g=dr^T dr$ which is 
the first fundamental form rsp. the K\"ahler form in this setting. 
Because the map is $r$ is linear, the rank of this map is constant
and also the determinant of $g$. We have seen that the determinant 
is for positive dimensional simplicial complexes always divisible 
by $3$. We have in the K\"ahler section of this article give some examples.
What values can the determinant take? Why does the factor $3$ always appear? 

\paragraph{}
In all of our experiments also with larger complexes for real energies, the eigenvalues
depend in a monotone way on each energy value suggesting that in general, $d\Lambda$ 
(defined two paragraphs above) is 
a positive matrix with generically non-zero determinant. This makes sense physically because if we
pump in energy in a simplex, this should also quantum mechanically lead to larger energy
values for the Hamiltonian $L$. We are far from being able to prove this even in this real
case. Here is a more precise statement. For any fixed simplicial complex $G$ with $n$ sets,
is it true that the map $h \in (\mathbb{R}^*)^n \to \lambda_x(L_{G,h})$ assigning to the energized
complex $(G,h)$ the eigenvalue belonging to $x$ monotone. 
We actually believe that it is strictly monotone, even so
we have seen that the Jacobian matrix $d\Lambda$ can have zero determinant at some places.

\paragraph{}
We can also look at the map $\Psi: \mathbb{C}^n \to \mathbb{C}^n$ given by 
$$  (h(x_1),\dots, h(x_n)) \mapsto (g(x_1,x_1),\dots,g(x_n,x_n)) $$
assigning to the energies the diagonal Green function entries. These entries
are always of interest in physics (at least in a regularized way as most of the
time, the Green functions $g(x,x)$ do not exist in classical physics). 
What are the properties of this map? It maps the energies assigned
to the simplices to the self-energy. In the topological case, it is 
$\omega(x) \to \omega(x) (1-\chi(S(x))$.
The map $\Psi$ maps the diagonal entries of $L$ to the diagonal entries
of $g$.  These diagonal entries are 
$L(x,x) = \chi(W^-(x))$ and $g(x,x) = \chi(W^+(x))$. 

\paragraph{}
Are division algebra valued waves on simplicial complexes of physical consequences?
Let us call an energized simplicial complex $(G,h)$
to be {\bf Sarumpaet}, if the inserted energy to a simplex agrees with the
spectral energy. In other words, if there is a fixed point $\psi(h) =h$,
then $h$ serves as an energy equilibrium. Of course, there is always the
{\bf vacuum} $h=0$, where all matrices and eigenvalues are zero. But
we are interested in more interesting equilibria, similarly as the Einstein
equations give more interesting manifolds besides the trivial flat case.  
\footnote{This is motivated by the novel \cite{Egan}.}

\paragraph{}
If the fixed point equation can not be satisfied, one can then look at 
the minima $S(G) = {\rm min}_{h} |h-\lambda|_{2}$. The minimum exists
if $h$ takes values in the units $\mathbb{U}$ of $\mathbb{K}$. 
One can then further look at minima of $S(G)$ on the set of all simplicial 
complexes with $n$ elements and then see what happens with these minima 
if $n \to \infty$. This is still science fiction, but one can imagine 
the set of extrema to lead to quantities which do not 
depend on any input as both the geometry $G$ and the fields $h$ are given.

\paragraph{}
The construction of non-compact K\"ahler manifolds with given 
fundamental group is not a problem \cite{ABCKT}. The compact case
is difficult. One can ask whether it is possible to glue the manifolds
obtained here to compact manifolds. One could try to glue two such 
manifolds together for example. Attaching compact K\"ahler manifolds
to a complex $G$ with field $h$ would be more exciting. It would
also be interesting to understand the boundaries of the different
connectivity components. These are the places, where two or more
eigenvalues collide. 

\section{Code}

\paragraph{}
The following Mathematica code computes the permutation group for a given simplical
complex $G$ and energy vector $e$. As usual, the code can conveniently grabbed from
the ArXiv version of this paper. In this example, we take a linear
complex with 10 elements and as energy values the 10th root
of unities. The group has order 72. 

\begin{tiny}
\lstset{language=Mathematica} \lstset{frameround=fttt}
\begin{lstlisting}[frame=single]
Generate[A_]:=Delete[Union[Sort[Flatten[Map[Subsets,A],1]]],1];
G = Generate[{{1,2},{2,3},{3,4},{5,6}}]; n=Length[G];  Q=SubsetQ;
S=Table[-(-1)^Length[G[[k]]]*If[k ==l,1,0],{k,n},{l,n}]; 
star[x_]:=Module[{u={}},Do[v=G[[k]];If[Q[v,x],u=Append[u,v]],{k,n}];u];
core[x_]:=Module[{u={}},Do[v=G[[k]];If[Q[x,v],u=Append[u,v]],{k,n}];u];
Wminus    = Table[Intersection[core[G[[k]]],core[G[[l]]]],{k,n},{l,n}];
Wplus     = Table[Intersection[star[G[[k]]],star[G[[l]]]],{k,n},{l,n}];
e=Table[Exp[1.0*2Pi I k/n],{k,n}]; 
EN[A_]:=If[A=={},0,Sum[e[[Position[G,A[[k]]][[1,1]]]],{k,Length[A]}]];
L=  Table[EN[Wminus[[k,l]]],   {k,n},{l,n}];   V=Eigenvalues[1.0*L];
g=S.Table[EN[Wplus[[k,l]]],    {k,n},{l,n}].S; Chop[Conjugate[g].L];
TrackEigenvalue[m_,w_]:=Module[{t=0,XX=V[[m]],q=1},
 Do[ e1=e;  e1[[w]]=e[[w]]*Exp[I t];
 EN[A_]:=If[A=={},0,Sum[e1[[Position[G,A[[k]]][[1,1]]]],{k,Length[A]}]];
 L1=Table[EN[Wminus[[k,l]]], {k,n},{l,n}]; V1=Eigenvalues[1.0*L1]; 
 min=n; Do[If[Abs[XX-V1[[k]]]<min,min=Abs[XX-V1[[k]]];q=k],{k,n}];
 XX=V1[[q]], {t,0,2Pi,2Pi/500}];
 min=n;Do[If[Abs[XX-V[[k]]]<min,min=Abs[XX-V[[k]]];q=k],{k,n}];q]; 
Perm[w_]:=Table[TrackEigenvalue[m,w],{m,n}]; 
GroupOrder[PermutationGroup[Table[PermutationCycles[Perm[w]],{w,n}]]]
\end{lstlisting}
\end{tiny}

\paragraph{}
The following lines compute the rank of the matrix $dr^T dr$ in the
case, where $G=K_4$. Just by changing $G$ one can compute the rank
for any set of sets. We also make the prime factorization because
we observe that for some strange reason, the determinant of $dr^T dr$
is always divisible by $3$ in positive dimensional cases.
The case of a sets of sets ({\bf multi-graph}) example like $G=\{ \{1,2,3\} \}$ 
are also considered zero dimensional, even so the element is $2$ dimensional.
The connection matrix is the $1 \times 1$ matrix $1$. 

\begin{tiny}
\lstset{language=Mathematica} \lstset{frameround=fttt}
\begin{lstlisting}[frame=single]
Generate[A_]:=Delete[Union[Sort[Flatten[Map[Subsets,A],1]]],1];
G1=Generate[{{1,2,3,4,5},{3,4,5,6,7}}];  G2=Generate[{Range[4]}];
G=G2; n=Length[G]; SQ=SubsetQ; 
c[x_]:=Module[{u={}},Do[v=G[[k]];If[SQ[x,v],u=Append[u,v]],{k,n}];u];
Wminus = Table[Intersection[c[G[[k]]],c[G[[l]]]],{k,n},{l,n}];
Q[e_]:=Module[{en,L},
  en[A_]:=If[A=={},0,Sum[e[[Position[G,A[[k]]][[1,1]]]],{k,Length[A]}]];
  L = Table[en[Wminus[[k,l]]],  {k,n},{l,n}]; Flatten[L]];
Id=IdentityMatrix[n]; A=Transpose[Table[Q[Id[[k]]],{k,n}]];         
Kaehler=Transpose[A].A;
Print[{Det[Kaehler],"Factors:",FactorInteger[Det[Kaehler]]}];
\end{lstlisting}
\end{tiny}

\begin{figure}[!htpb]
\scalebox{0.8}{\includegraphics{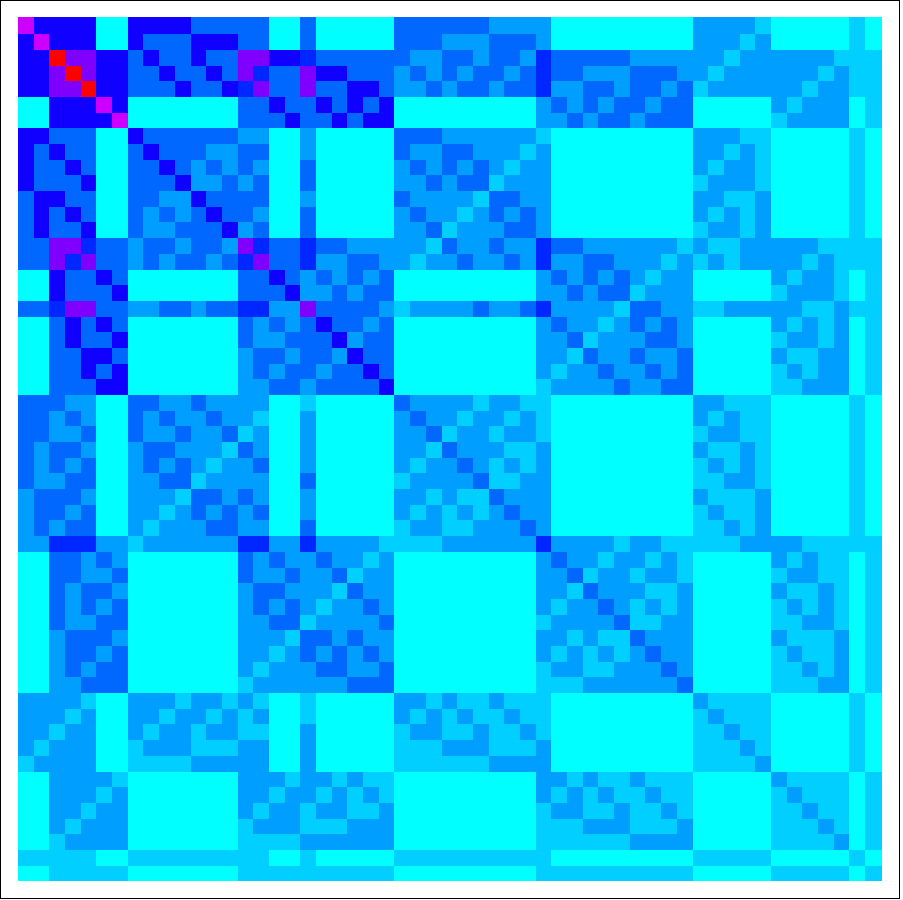}}
\label{kaehler}
\caption{
The K\"aehler bilinear form $A=dr^T dt$ in the case of $G=G1$ of the above code
which is the simplicial complex generated by the two sets 
$\{1,2,3,4,5\},\{3,4,5,6,7\}$ which has $55$ elements. 
The determinant of $A$ is $3^{113} 5^{7} 7^7$ in this case. 
}
\end{figure}

\begin{figure}[!htpb]
\scalebox{0.8}{\includegraphics{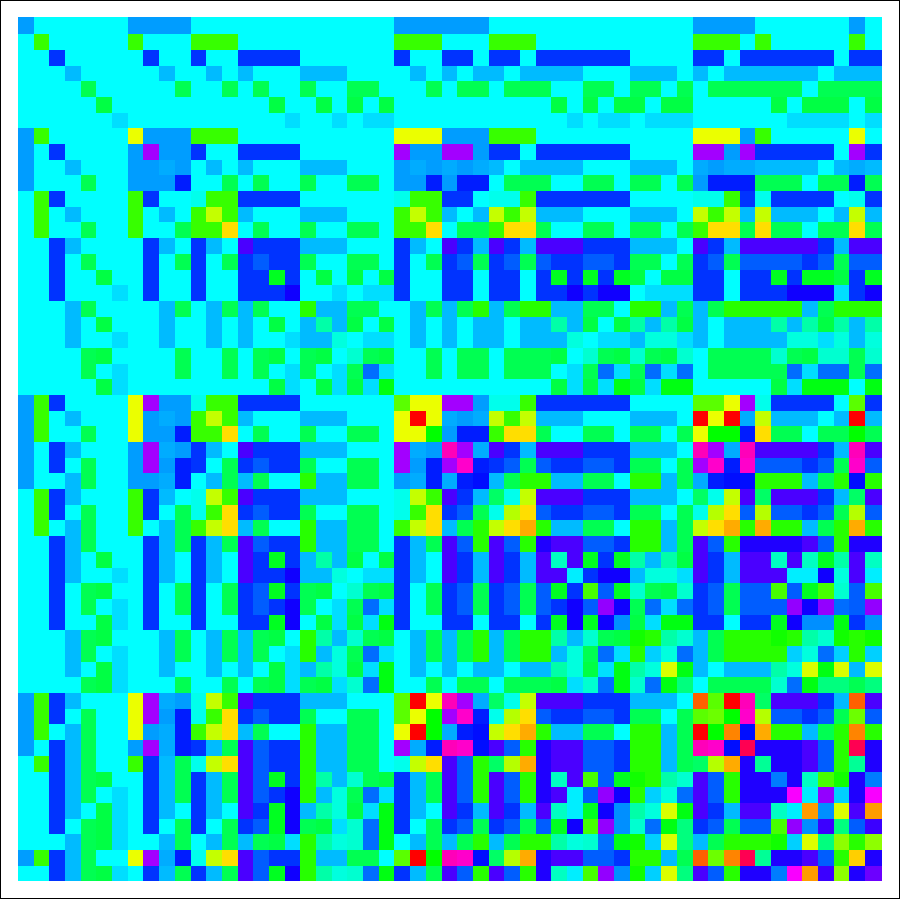}}
\label{matrix}
\caption{
We see the $55 \times 55$ matrix $L$ in the case of the complex generated by 
$\{ \{1,2,3,4,5\},\{3,4,5,6,7\} \}$. The energy was chosen to be random
with elements in $[0,1]$. 
}
\end{figure}

\section*{Appendix: non-commutative determinants}

\paragraph{}
There are various determinant constructions for matrices taking values in 
non-commutative rings. First of all, there is the {\bf Leibniz determinant}
${\rm det}(L) = \sum_{\sigma} {\rm sign}(\sigma) L_{1,\sigma(1)} \cdots L_{n \sigma(n)}$
which does not satisfy the {\bf Cauchy-Binet property}
${\rm det}(A B) = {\rm det}(A) {\rm det}(B)$.
Already for $1 \times 1$ matrices $A=[[a]], B=[[b]]$ one has ${\rm det}(AB)=ab$ and
${\rm det}(BA)=ba$. We also can consider the
{\bf Study determinant} \cite{Study1920} as well as the {\bf Dieudonn\'e determinant}
\cite{Dieudonne1943} which both do satisfy the Cauchy-Binet relation. Both have their
uses, advantages and disadvantages. 

\paragraph{}
The Leibniz determinant is defined for any matrix
over a commutative or non-commutative ring $\mathbb{K}$ and does not even use associativity
if we make a convention about where multiplication starts (like from the right if no brackets are 
used). The ring $\mathbb{K}$ can also be a division algebra like 
$\mathbb{O}$ which is no more a ring as the multiplication is no more associative. It could be a 
{\bf normed Jordan algebra} or {\bf semi-simple Lie algebra}. Maybe of more physics interest is when
$\mathbb{K}$ is $C^*$-algebra of bounded linear operators on a Hilbert space or a Banach algebra.
Of special interests are von Neumann algebras. If $\mathbb{K}$ is a quadratic field one is in 
an algebraic number field setting, which could be of interest in number theory. 

\paragraph{}
The {\bf Dieudonn\'e determinant} is covered in 
\cite{Dieudonne1943,ArtinGeometricAlgebra,RosenbergKTheory}.
We especially can follow Artin (Chapter IV) or Rosenberg (Section 2.2).
If $\mathbb{K}$ is a division algebra denote by $\overline{\overline{K}}$ its Abelianization.
It is given in terms of the multiplicative group $\mathbb{K}^*=\mathbb{K} \setminus \{0\}$ 
of $\mathbb{K}$ as $\overline{\mathbb{K}} = \mathbb{K}^*/[\mathbb{K}^*,\mathbb{K}^*] \cup \{0\}$. 
While in $\mathbb{K}=\mathbb{C}$ we have $\overline{x}=x$ in the quaternions, this is no more
the case for quaternions. We have $\overline{-1}=1$ for example. So, even if $-1$ is a real
number within the quaternions and we are working in $\mathbb{K}=\mathbb{H}$ then $\overline{-1}
=1$. The reason is that the commutator $[i,j]=i j i^{-1} j^{-1} = k (-i) (-j) = k^2=-1$
contains $-1$. 

\paragraph{}
The Dieudonn\'e determinant of a matrix $A$ is axiomatically defined as a function $M(n,\mathbb{K})
\to \overline{\mathbb{K}}$ which has the following properties:
(i) ${\rm det}(1)=1$, (ii) Multiplying a row of $A$ with $\mu$ from the left 
multiplies the determinant by $\overline{\mu}$ .
(iii) Adding a row of $A$ to an other does not change the determinant.
It follows: (iv) $A$ is singular if and only if ${\rm det}(A)=0$.
(v) If two rows are interchanged, then ${\rm det}(A)$ is multiplied with $\overline{(-1)}$. 
(vi) Multiplying $A$ from the right by $\mu$ multiplies the determinant by $\overline{\mu}$. 
For the following, also see \cite{Brenner1968}:
(vii) ${\rm det}(A^T) = {\rm det}(A)$. 
(viii) The Laplace expansion works with respect to any row or column \cite{Brenner1968}.
(ix) There is a Leibniz formula 
${\rm det}(A) = \sum_{\sigma} (-1)^{\sigma} A_{1 \sigma(1)} \cdots A_{n \sigma(n)} w_{\sigma}$
holds, where each $w_{\sigma}$ is a commutator of the multiplicative group. 

\paragraph{} The explicit computation that the commutator $c = a b a^{-1} b^{-1}$ can 
be realized through row reduction is shown as Theorem~(4.2) in \cite{ArtinGeometricAlgebra}:
$\left[ \begin{array}{cc} 1 & 0 \\ 
                          0 & 1 \end{array} \right] \to$ 
$\left[ \begin{array}{cc} 1 & 0 \\ 
                          a^{-1} & 1 \end{array} \right] \to$ 
$\left[ \begin{array}{cc} 0 & -a \\ 
                          a^{-1} & 1 \end{array} \right] \to$ 
$\left[ \begin{array}{cc} 0 & -a \\ 
                          a^{-1} & b^{-1} \end{array} \right] \to$ 
$\left[ \begin{array}{cc} a b a^{-1} & 0 \\ 
                          a^{-1} & b^{-1} \end{array} \right] \to$ 
$\left[ \begin{array}{cc} a b a^{-1} & 0 \\ 
                          1          & b^{-1} \end{array} \right] \to$ 
$\left[ \begin{array}{cc} 0          & -c \\ 
                          1          & b^{-1} \end{array} \right] \to$ 
$\left[ \begin{array}{cc} 0          & -c \\ 
                          1          & c   \end{array} \right] \to$ 
$\left[ \begin{array}{cc} 1          &  0 \\ 
                          1          & c   \end{array} \right] \to$ 
$\left[ \begin{array}{cc} 1          &  0 \\ 
                          0          & c   \end{array} \right]$. This sequence
of steps can be performed in general in the last two rows of an $n \times n$ matrix. 

\paragraph{}
The row reduction step shows that $A$ can be written as $A=B D(1, \dots, \mu)$ with 
an unimodular $B$ and some $\mu \in \mathbb{K}$. 
In that case ${\rm det}(A) = \overline{\mu}$. If $\overline{\mu}=1$, then $\mu$ is 
a product of commutators. This means $SL(1,\mathbb{K}) = \overline{\mathbb{K}}$.
Artin gives the following examples: 
$$ {\rm det}( \left[ \begin{array}{cc} 0 & b  \\ 
                                       c & d  \end{array} \right] ) = \overline{-cb} $$
For nonzero $a$ values, one has
$$ {\rm det}( \left[ \begin{array}{cc} a & b  \\ 
                                       c & d  \end{array} \right] ) = \overline{ad-a c a^{-1} b} $$
The example shows that one can not factor out $b$ in the second row: 
$$ {\rm det}( \left[ \begin{array}{cc} 1 & a  \\
                                       b & ab  \end{array} \right] ) = \overline{ab-ba} $$

\paragraph{}
The {\bf Study determinant}
$\det(M)$ is the unique multiplicative functional on $M(n,\mathbb{H})$ 
that is zero exactly on singular matrices and which has the property that
$\det(1+r E_{ij})=1$ for $i \neq j$ and $r \in \mathbb{H}$. It agrees with
$|\det(A)|$ on $\mathbb{C}$ and satisfies ${\rm det}(A) = |\prod_k \lambda_k|$. 
The Study determinant was introduced in \cite{Study1920} and is also part of the axiomatization
of \cite{Dieudonne1943}. Like the Dieudonn\'e determinant, it is defined by row 
reducing the matrix to an upper triangular $2 \times 2$ block matrix and then 
conjugating the $2 \times 2$ case. The Study determinant is just the norm of the Dieudonn\'e 
determinant and contains therefore in general a bit less information than the later. 
If $A$ is upper triangular, then ${\rm sdet}(A) =\prod_j |A_{jj}|$
as well as ${\rm sdet}(A) = \prod_j |\lambda_j|$ if $\lambda_j$ are the 
eigenvalues. For complex matrices the Study determinant satisfies 
${\rm sdet}(A) = |\det(A)|$. 

\paragraph{}
The eigenvalues of $A$ in a division algebra can be defined 
by noting that $A$ is similar to an upper
triangular matrix by Gaussian elimination. One can also conjugate $A$ 
over the quaternions to a complex Jordan normal form matrix $J$.
If $B$ is the complexification of $A$, meaning 
$B= \left[ \begin{array}{cc} A_1 &  - A_2^* \\
                             A_2 &    A_1^* \end{array} \right]$ if
$A=A_1 + j A_2$ where $A_i$ are complex, then the spectrum of $B$
is $\lambda_1,\lambda_1^*  \dots \lambda_n, \lambda_n^*$. 
They satisfy the eigenvalue equation $A u = u \lambda$. 

\paragraph{}

\bibliographystyle{plain}

\end{document}